\documentclass[conference, 10pt]{IEEEtran}
\usepackage[dvips]{graphicx}
\usepackage{amssymb,amsmath}
\usepackage{subfig}
\usepackage{colortbl}

\IEEEoverridecommandlockouts
\newtheorem{theorem}{Theorem}

\newtheorem{corollary}[theorem]{Corollary}
\newtheorem{definition}{Definition}
\newtheorem{example}{Example}

\newcommand{\ie}{\emph{i.e. }}
\newcommand{\etal}{\textit{et al.}}
\newcommand{\m}[1]{\mathcal{#1}}

\begin{document}

\title{Algebraic Network Coding Approach to Deterministic Wireless Relay Networks}
\author{\authorblockN{MinJi Kim, Muriel M\'{e}dard\vspace*{-.2cm}}
\authorblockA{\\ Research Laboratory of Electronics\\
Massachusetts Institute of Technology\\
Cambridge, MA 02139, USA\\
Email: \{minjikim, medard\}@mit.edu}\vspace*{-0.75cm}
}

\maketitle

\begin{abstract}
The deterministic wireless relay network model, introduced by Avestimehr \etal, has been proposed for approximating Gaussian relay networks. This model, known as the ADT network model, takes into account the broadcast nature of wireless medium and interference. Avestimehr \etal \ showed that the Min-cut Max-flow theorem holds in the ADT network.

In this paper, we show that the ADT network model can be described within the algebraic network coding framework introduced by Koetter and M\'{e}dard. We prove that the ADT network problem can be captured by a single matrix, called the \emph{system matrix}. We show that the min-cut of an ADT network is the rank of the system matrix; thus, eliminating the need to optimize over exponential number of cuts between two nodes to compute the min-cut of an ADT network.

We extend the capacity characterization for ADT networks to a more general set of connections. Our algebraic approach not only provides the Min-cut Max-flow theorem for a single unicast/multicast connection, but also extends to non-multicast connections such as multiple multicast, disjoint multicast, and two-level multicast. We also provide sufficiency conditions for achievability in ADT networks for any general connection set. 
In addition, we show that the random linear network coding, a randomized distributed algorithm for network code construction, achieves capacity for the connections listed above. 

Finally, we extend the ADT networks to those with random erasures and cycles (thus, allowing bi-directional links). Note that ADT network was proposed for approximating the wireless networks; however, ADT network is acyclic. Furthermore, ADT network does not model the stochastic nature of the wireless links. With our algebraic framework, we incorporate both cycles as well as random failures into ADT network model.
\end{abstract}
\IEEEpeerreviewmaketitle

\section{Introduction}\label{sec:introduction}

The capacity of the wireless relay networks, unlike its wired counterparts, is still a generally open problem. Even for a simple relay network with one source, one sink, and one relay, the capacity is unknown. In order to better approximate wireless relay networks, \cite{adt1}\cite{adt2} proposed a binary linear deterministic network model (known as the ADT model), which incorporates the broadcast nature of the wireless medium as well as interference. A node within the network receives the bit if the signal is above the noise level; multiple bits that simultaneously arrive at a node are superposed. Note that this model assumes operation under high Signal-to-Noise-Ratio (SNR) -- interference from other users' dominate the noise.

References \cite{adt1}\cite{adt2} characterized the capacity of the ADT networks, and generalized the Min-cut Max-flow theorem for graphs to ADT networks for single unicast/multicast connections. Efficient algorithms to compute the coding strategies to achieve minimum cut has been proposed in \cite{fragouli}\cite{edmund}. Reference \cite{goemans} introduced a flow network, called \emph{linking network}, which generalizes the ADT model, and relates the ADT networks to matroids; thus, allowing the use of matroid theory to solve ADT network problems.

In this paper, we make a connection between the ADT network and algebraic network coding introduced by Koetter and M\'{e}dard \cite{algebraic}, in which they showed an equivalence between the solvability of a network problem and certain algebraic conditions. This paper does not prove or disprove ADT network model's ability to approximate the capacity of the wireless networks, but shows that the ADT network problems, including that of computing the min-cut and constructing a code, can be captured by the algebraic framework.

There are several advantages in generalizing ADT networks to the algebraic network coding formulation. First, this allows the use of results on network coding to better understand the ADT networks. Network coding, proposed in \cite{ahlswede}, allows and encourages \emph{algebraic mixing} of data at intermediate nodes. This mixing maximizes throughput for multicast traffic \cite{ahlswede}, and is robust against failures and erasures \cite{algebraic}. However, most of the classic network coding results consider scalar operations in arbitrary field size, $\mathbb{F}_q$. In order to take advantage of low-complexity operations in $\mathbb{F}_2$, \cite{jaggi} introduces network codes, called \emph{permute-and-add}, that only require bit-wise vector operations, and shows that their performance is still optimal. This shows that network codes in higher field size $\mathbb{F}_q$ can be converted to a binary-vector code without loss in performance.
The connection between ADT networks and algebraic network coding allows the use of existing theorems in the network coding literature to derive new results for ADT networks.


The paper is organized as follows. We present the network model in Section \ref{sec:model}, and an algebraic formulation of the ADT network in Section \ref{sec:algebraic}. Using this algebraic formulation, we provide a definition of min-cut in ADT networks in Section \ref{sec:mincut}. In Sections \ref{sec:singlesource}, we restate the Min-cut Max-flow theorem using our algebraic formulation. In Section \ref{sec:general}, we present new capacity characterizations for ADT networks to a more general set of traffic requirements, such as two-level multicast, disjoint multicast, and multiple source multicast. Note that \cite{adt1}\cite{adt2}\cite{fragouli}\cite{edmund}\cite{goemans} consider single unicast or single multicast connection. Finally, we incorporate random erasures in Section \ref{sec:robust}, and extend the ADT networks to networks with cycles in Section \ref{sec:delay}.

\begin{figure}[tbp]
\begin{center}
\includegraphics[width=0.37\textwidth]{./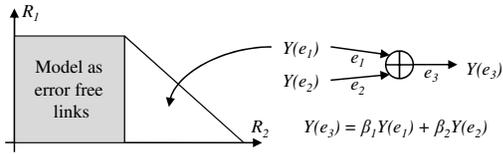}
\end{center}\vspace*{-.3cm}\caption{Additive MAC with two users, and the corresponding rate region. The triangular region is modeled as a set of finite field addition MACs.}\label{fig:mac}\vspace*{-.3cm}
\end{figure}



\section{Network Model}\label{sec:model}

As in \cite{adt1}\cite{adt2}, we shall consider a high SNR regime, in which interference is the dominating factor. In the high SNR regime, the Cover-Wyner region may be well approximated by the combination of two regions, one square and one triangular, as shown in Figure \ref{fig:mac}. The square (shaded) part can be modeled as parallel links for the users, since they do not interfere. The triangular (unshaded) part can be considered as that of a noiseless finite-field addition multiple access channel (MAC) \cite{mac}.
Note in the high SNR regime, analog network coding, which allows and encourages strategic interference, is near optimal \cite{analog_opt}. It is important to note that a network operating in high SNR regime is different from a network with high gain since a large gain amplifies the noise as well as the signal.

The ADT network model uses binary channels, and thus, a binary additive MAC is used to model interference. Prior to \cite{adt1}\cite{adt2}, Effros \etal \ presented an additive MAC over a finite field $\mathbb{F}_q$ \cite{additive}. The Min-cut Max-flow theorem holds for all of the cases above. It may seem that the ADT network model differs greatly from that of \cite{additive} owing to the difference in field sizes used. However, we can achieve a higher field size in ADT networks by combining multiple binary channels and using a \emph{binary-vector} scheme as shown in \cite{jaggi}. In other words, consider two nodes $V_1$ and $V_2$ with two binary channels connecting $V_1$ to $V_2$. Now, instead of considering them as two binary channels, we can ``combine'' the two channels as one with capacity of 2-bits. In this case, instead of using $\mathbb{F}_2$, we can use a larger field size of $\mathbb{F}_4$. Thus, selecting a larger field size $\mathbb{F}_q$, $q >2$ in ADT network model results in fewer but higher capacity parallel channels. Furthermore, it is known that to achieve capacity for multicast connections, $\mathbb{F}_2$ is not sufficient \cite{edmund}; thus, we need to operate in a higher field size. Therefore, in this work, we shall not restrict ourselves to $\mathbb{F}_2$.

\begin{figure}[tbp]
\begin{center}
\includegraphics[width=0.36\textwidth]{./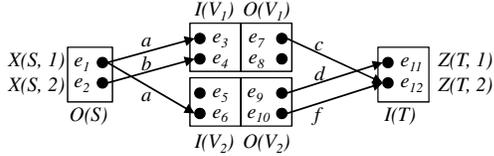}
\end{center}\vspace*{-.3cm}\caption{Example network. We omit $I(S)$ and $O(T)$ in this diagram as they do not participate in the communication.}\label{fig:network}\vspace*{-.3cm}
\end{figure}


We now proceed to defining the network model precisely. A wireless network is modeled using a directed graph $G = (\m{V}, \m{E})$ with a node set $\m{V}$ and an edge set $\m{E}$, as shown in Figure \ref{fig:network}. A node $V\in \m{V}$ consists of \emph{input ports} $I(V)$ and \emph{output ports} $O(V)$. Let $\m{S}, \m{T} \subseteq \m{V}$ be the set of sources and destinations. An edge $(e_1, e_2)$ exists only from an output port $e_1 \in O(V_1)$ to an input port $e_2 \in I(V_2)$, for any $V_1, V_2 \in \m{V}$. Let $\m{E}(V_1, V_2)$ be the set of edges from $O(V_1)$ to $I(V_2)$.
All edges are of unit capacity, where capacity is normalized with respect to the symbol size of $\mathbb{F}_q$. Parallel links of $\m{E}(V_1, V_2)$ deterministically model noise between $V_1$ and $V_2$.

Given such a wireless network $G = (\m{V}, \m{E})$, a source node $S \in \m{S}$ has \emph{independent} random processes $\m{X}(S) = [X(S, 1), X(S, 2), ..., X(S, \mu(S))]$, $\mu(S) \leq |O(S)|$, which it wishes to communicate to a set of destination nodes $\m{T}(S) \subseteq \m{T}$. In other words, we want nodes $T \in \m{T}(S)$ to replicate a subset of the random processes, denoted $\m{X}(S, T) \subseteq \m{X}(S)$, by the means of the network. We define a \emph{connection} $c$ as a triple $(S, T, \m{X}(S, T))$, and the rate of $c$ is defined as $R(c) = \sum_{X(S, i) \in \m{X}(S, T)} H(X(S, i)) = |\m{X}(S,T)|$ (symbols).

\begin{figure}[tbp]
\scriptsize
\begin{align*}
Y(e_1) & = \alpha_{(1, e_1)}X(S, 1) + \alpha_{(2, e_1)}X(S, 2)\\
Y(e_2) & = \alpha_{(1, e_2)}X(S, 1) + \alpha_{(2, e_2)}X(S, 2)\\
Y(e_3) &= Y(e_6) = Y(e_1)\\
Y(e_4) &= Y(e_2)\\
Y(e_5) & = Y(e_{8}) = 0\\
Y(e_{7}) & = \beta_{(e_3, e_{7})}Y(e_3) + \beta_{(e_4, e_{7})}Y(e_4)\\
Y(e_{9}) & = Y(e_{11}) =  \beta_{(e_6, e_{9})}Y(e_6)\\
Y(e_{10}) & = \beta_{(e_6, e_{10})}Y(e_6)\\
Y(e_{12}) & = Y(e_{7}) + Y(e_{10})\\
Z(T, 1) & = \epsilon_{(e_{11}, (T, 1))} Y(e_{11}) +  \epsilon_{(e_{12}, (T, 1))} Y(e_{12})\\
Z(T, 2) & = \epsilon_{(e_{11}, (T, 2))} Y(e_{11}) +  \epsilon_{(e_{12}, (T, 2))} Y(e_{12})
\end{align*}\vspace*{-.3cm}\caption{Equations relating the various processes of Figure \ref{fig:network}.}\label{fig:equations}\vspace*{-.5cm}
\end{figure}

Information is transmitted through the network from the source to the destinations in the following manner. A node $V$ sends information through $e \in O(V)$ at a rate at most one symbol per time unit. Let $Y(e)$ denote the random process at port $e$.
In general, $Y(e)$, $e\in O(V)$, is a function of $Y(e')$, $e'\in I(V)$. In this paper, we consider only linear functions.
\begin{equation}\label{eq:y}
Y(e) = \sum_{e'\in I(V)} \beta_{(e', e)} Y(e'), \text{ for $e \in O(V)$.}
\end{equation}
For a source node $S$, and $e\in O(S)$,
\begin{equation}\label{eq:y-s}
Y(e) =\sum_{e'\in I(V)} \beta_{(e', e)} Y(e') + \sum_{X(S,i) \in \m{X}(S)} \alpha_{(i, e)} X(S, i).
\end{equation}
Finally, the destination $T$ receives a collection of input processes $Y(e')$, $e' \in I(T)$. Node $T$ generates a set of random processes $\m{Z}(T)= [Z(T, 1), Z(T, 2), ..., Z(T, \nu(T))]$ where
\begin{equation}\label{eq:z}
Z(T, i) = \sum_{e' \in I(T)} \epsilon_{(e', (T, i))} Y(e').
\end{equation}
A connection $c = (S, T, \m{X}(S, T))$ is established successfully if $\m{X}(S) = \m{Z}(T)$. A node $V$ is said to \emph{broadcast} to a set $\m{V'}\subseteq \m{V}$ if $\m{E}(V, V') \ne \emptyset$ for all $V'\in  \m{V'}$. In Figure \ref{fig:network}, node $S$ broadcasts to nodes $V_1$ and $V_2$. Superposition occurs at the input port $e' \in I(V)$, \ie $Y(e') = \sum_{(e, e') \in \m{E}} Y(e)$ over a finite field $\mathbb{F}_q$. We say there is a $|\m{V'}|$-user MAC channel if $\m{E}(V', V) \ne \emptyset$ for all $V' \in \m{V'}$. In Figure \ref{fig:network}, nodes $V_1$ and $V_2$ are users, and $T$ the receiver in a 2-user MAC.

For a given network $G$ and a set of connections $\m{C}$, we say that $(G,\m{C})$ is \emph{solvable} if it is possible to establish successfully all connections $c\in \m{C}$. The broadcast and MAC constraints are given by the network; however, we are free to choose the variables $\alpha_{(i, e)}$, $\beta_{(e', e)}$, and $\epsilon_{(e',i)}$ from $\mathbb{F}_q$. Thus, the problem of checking whether a given $(G, \m{C})$ is solvable is equivalent to finding a feasible assignment to $\alpha_{(i,e)}, \beta_{(e', e)}$, and $\epsilon_{(e', (T, i))}$.

\begin{example}\label{ex:equations} The equations in Figure \ref{fig:equations} relate the various processes in the example network in Figure \ref{fig:network}. Note that in Figure \ref{fig:network}, we have set $Y(e_1) =a$, $Y(e_2)=b$, $Y(e_7)=c$, $Y(e_9) = d$, and $Y(e_{10})= f$ for notational simplicity.
\end{example}

\subsection{An Interpretation of the Network Model}\label{sec:interpretation}

\begin{figure}[tbp]
\begin{center}
\includegraphics[width=0.50\textwidth]{./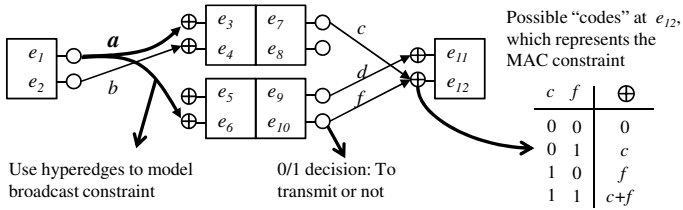}
\end{center}\vspace*{-.3cm}\caption{A new interpretation of the example network from Figure \ref{fig:network}.
}\label{fig:newnetwork}\vspace*{-.5cm}
\end{figure}

The ADT network model uses multiple channels from an output port to model the broadcast channel, and a finite field additive MAC to model interference, as shown in Figure \ref{fig:network}. Note that, in Figure \ref{fig:network}, there are two edges from output port $e_1$ to input ports $e_3$ and $e_6$, respectively; however, due to the broadcast constraint, the two edges $(e_1, e_3)$ and $(e_1, e_6)$ carry the same information $a$. This introduces considerable complexity in constructing a network code as well as computing min-cut of the network \cite{adt1}\cite{adt2}\cite{fragouli}\cite{goemans}. This is due to the fact that the multiple edges from a port do not capture the broadcast dependencies of edges. Furthermore, the broadcast dependencies have to be propagated through the network.

In our approach, we remedy this by introducing the use of hyperedges, as shown in Figure \ref{fig:newnetwork}.
An output port's decision to transmit affects the entire hyperedge; thus, the output port transmits to all the input ports connected to the hyperedge simultaneously. In Section \ref{sec:algebraic}, we shall include the notion of hyperedges in our algebraic formulation to capture the broadcast nature of the wireless medium.
This removes the difficulties of computing the min-cut of ADT networks (Section \ref{sec:mincut}), as it naturally captures the dependencies caused by the broadcasts.

The finite field additive MAC model can be viewed as a set of codes that an input port may receive. As shown in Figure \ref{fig:newnetwork}, input port $e_{12}$ receives one of the four possible codes. The code that $e_{12}$ receives depends on output ports $e_7$'s and $e_9$'s decision to transmit or not.

The difficulty in constructing a network code does not come from any single broadcast or MAC constraint.
The difficulty in constructing a code is in satisfying multiple MAC and broadcast constraints simultaneously. For example, in Figure \ref{fig:constraint}, the fact that $e_4$ may receive $a+b$ does not constrain the choice of $a$ nor $b$. The same argument applies to $e_6$ receiving $a+c$. However, the problem arises from the fact that a choice of value for $a$ at $e_4$ interacts both with $b$ and $c$. As we shall see in Section \ref{sec:mincut}, we eliminate this difficulty by allowing the use of a larger field, $\mathbb{F}_q$.

\begin{figure}[tbp]
\begin{center}
\includegraphics[width=0.18\textwidth]{./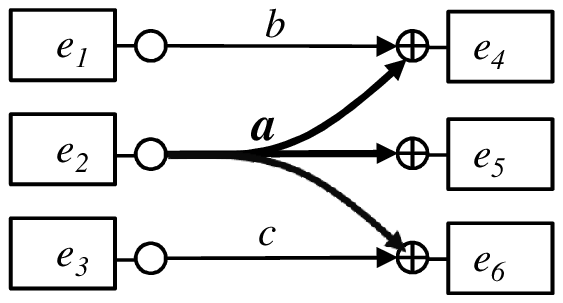}
\end{center}\vspace*{-.3cm}\caption{An example of finite field additive MAC.}\label{fig:constraint}\vspace*{-.3cm}
\end{figure}


\section{Algebraic Formulation}\label{sec:algebraic}

\begin{figure}[tbp]
\begin{center}
\includegraphics[width=0.33\textwidth]{./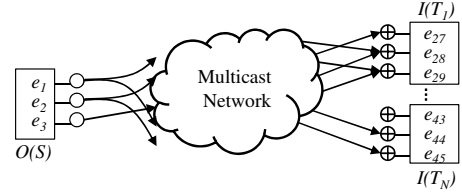}
\end{center}\vspace*{-.3cm}\caption{Single multicast network with source $S$ and receivers $T_1, ..., T_N$.
}\label{fig:multicast}\vspace*{-.5cm}
\end{figure}

We provide an algebraic formulation for the ADT network problem $(G, \m{C})$.
For simplicity, we describe the multicast problem with a single source $S$ and a set of destination nodes $\m{T}$, as in Figure \ref{fig:multicast}. However, this formulation can be extended to multiple source nodes $S_1, S_2, ... S_K$ by adding a super-source $S$ as in Figure \ref{fig:supersource}.

We define a system matrix $M$ to describe the relationship between source's random processes $\m{X}(S)$ and the destinations' processes $\m{Z} = [\m{Z}(T_1), \m{Z}(T_2), ..., \m{Z}(T_{|\m{T}|})]$. Thus, we want to characterize $M$ where
\begin{equation}
\m{Z} = \m{X}(S) \cdot M.
\end{equation}
The matrix $M$ is composed of three matrices, $A$, $F$, and $B$.


Given $G$, we define the adjacency matrix $F$ as follows:
\begin{equation}\label{eq:F}F_{i, j}= \begin{cases}
1 & \text{if $(e_i, e_j) \in \m{E}$,}\\
\beta_{(e_i, e_j)} & \text{if $e_i \in I(V)$, $e_j \in O(V)$ for $V \in \m{V}$,}\\
0 & \text{otherwise.}
\end{cases}
\end{equation}

Matrix $F$ is defined on the ports, rather than on the nodes. This is because, in the ADT model, each port is the basic receiver/transmitter unit. Each entry $F_{i,j}$ represents the input-output relationships of the ports. A zero entry indicates that the ports are not directly connected, while an entry of one represents that they are connected. The adjacency matrix $F$ naturally captures the physical structure of the ADT network. Note that a row with multiple entries of 1 represent the broadcast hyperedge; while a column with multiple entries of 1 represent the MAC constraint. Note that the 0-1 entries of $F$ represent the \emph{fixed} network topology as well as the broadcast and MAC constraints. On the other hand, $\beta_{(e_i, e_j)}$ are free variables, representing the coding coefficients used at $V$ to map the input port processes to the output port processes. This is the key difference between the work presented here and in \cite{algebraic} -- $F$ is partially fixed in the ADT network model due to network topology and broadcast/MAC constraints, while in \cite{algebraic}, only the network topology affect $F$.

In \cite{adt1}\cite{adt2}, the nodes are allowed to perform any internal operations; while in \cite{fragouli}\cite{goemans}, only permutation matrices (\ie routing) are allowed. In their work \cite{adt1}\cite{adt2}, the authors also show that linear operations are sufficient for achieving capacity in ADT networks for a single multicast traffic. We propose a general setup in which $\beta_{(e_i, e_j)} \in \mathbb{F}_q$ -- thus, allowing any matrix operation, as in \cite{adt1}\cite{adt2}.



\begin{figure}[tbp]
\begin{center}
\includegraphics[width=.47\textwidth]{./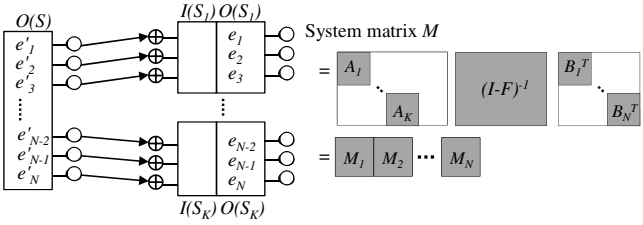}
\end{center}\vspace*{-.4cm}\caption{A network with multiple sources $S_1, S_2, ..., S_K$ can be converted to a single source problem by adding a super-source $S$ with $|O(S)| = \sum_{i=1}^K |O(S_i)|$. Each $e'_j \in O(S)$ has a ``one-to-one connection'' to a $e_j \in O(S_i)$, for $i \in [1, K]$. Matrix $A_i$ represent the encoding matrix for source $S_i$, while $B_j$ is the decoding matrix at destination $T_j$. The white area represents the zero elements, and the shaded area represents the coding coefficients.}\label{fig:supersource}\vspace*{-.5cm}
\end{figure}

Note that $F^k$, the $k$-th power of an adjacency matrix of a graph $G$, shows the existence of paths of length $k$ between any two nodes in $G$. Therefore, the series $I + F + F^2 + F^3 + ...$ represent the connectivity of the network. It can be verified that $F$ is nilpotent, which means that there exists a $k$ such that $F^k$ is a zero matrix. As a result, $I + F + F^2 + F^3 + ...$ can be written as $(I-F)^{-1}$. Thus, $(I-F)^{-1}$ represent the impulse response of the network. Note that, $(I-F)^{-1}$ exists for all acyclic network since $I-F$ is an upper-triangle matrix with all diagonal entries equal to 1; thus, $\det(I-F) = 1$.

\begin{example}\label{ex:F}
In Figure \ref{fig:F}, we provide the $12 \times 12$ adjacency matrix $F$ for the example network in Figures \ref{fig:network} and \ref{fig:newnetwork}. Note that the first row (with two entries of 1) represents the broadcast hyperedge, $e_1$ connected to both $e_3$ and $e_6$. The last column with two entries equal to 1 represents the MAC constraint, both $e_7$ and $e_{10}$ talking to $e_{12}$.
The highlighted elements in $F$ represent the coding variables, $\beta_{(e', e)}$, of $V_1$ and $V_2$ in Figure \ref{fig:newnetwork}. For some $(e', e)$, $\beta_{(e',e)} =0$ since these ports of $V_1$ and $V_2$ are not used.
\end{example}


Matrix $A$ represents the encoding operations performed at $S$. We define a $|\m{X}(S)| \times |\m{E}|$ encoding matrix $A$ as follows:
\begin{equation}
A_{i,j} = \begin{cases}
\alpha_{(i, e_j)}  & \text{if $e_j\in O(S)$ and $X(S,i) \in \m{X}(S)$,}\\
0& \text{otherwise}.
\end{cases}
\end{equation}

\begin{example}\label{ex:A}
We provide the $2 \times 12$ encoding matrix $A$ for the network in Figure \ref{fig:network}.
\[
A = \begin{pmatrix}
\alpha_{1, e_1} & \alpha_{1, e_2} & 0 & \dotsb & 0\\
\alpha_{2, e_1} & \alpha_{2, e_2} & 0 & \dotsb & 0\\
\end{pmatrix}.
\]
\end{example}


Matrix $B$ represents the decoding operations performed at the destination nodes $T \in \m{T}$. Since there are $|\m{T}|$ destination nodes, $B$ is a matrix of size $|\m{Z}| \times |\m{E}|$ where $\m{Z}$ is the set of random processes derived at the destination nodes. We define the decoding matrix $B$ as follows:
\begin{equation}
B_{i,(T_j, k)} = \begin{cases}
\epsilon_{(e_i, (T_j, k))} & \text{if $e_i \in I(T_j), Z(T_j, k) \in \m{Z}(T_j)$},\\
0 & \text{otherwise.}
\end{cases}
\end{equation}

\begin{example}\label{ex:B}
We provide the $2 \times 12$ decoding matrix $B$ for the example network in Figure \ref{fig:network}.
\[
B = \begin{pmatrix}
0 & \dotsb & 0 & \epsilon_{(e_{11}, (T, 1))}  & \epsilon_{(e_{12}, (T, 1))} \\
0 & \dotsb & 0 & \epsilon_{(e_{11}, (T, 2))}  & \epsilon_{(e_{12}, (T, 2))} \\
\end{pmatrix}.
\]
\end{example}
\begin{figure}[tbp]
\[\scriptsize
\left(
\begin{array}{cccccccccccc}
0 & 0 & 1 & 0 & 0 & 1 & 0 & 0 & 0 & 0 & 0 & 0\\
0 & 0 & 0 & 1 & 0 & 0 & 0 & 0 & 0 & 0 & 0 & 0\\
0 & 0 & 0 & 0 & 0 & 0 & \cellcolor[gray]{.8} \beta_{(e_3, e_7)} & \cellcolor[gray]{.8} 0 & 0 & 0 & 0 & 0\\
0 & 0 & 0 & 0 & 0 & 0 & \cellcolor[gray]{.8} \beta_{(e_4, e_7)} & \cellcolor[gray]{.8} 0 & 0 & 0 & 0 & 0\\
0 & 0 & 0 & 0 & 0 & 0 & 0 & 0 & \cellcolor[gray]{.8} 0 & \cellcolor[gray]{.8} 0 & 0 & 0\\
0 & 0 & 0 & 0 & 0 & 0 & 0 & 0 & \cellcolor[gray]{.8} \beta_{(e_6, e_{9})} & \cellcolor[gray]{.8} \beta_{(e_6, e_{10})}& 0 & 0\\
0 & 0 & 0 & 0 & 0 & 0 & 0 & 0 & 0 & 0 & 0 & 1\\
0 & 0 & 0 & 0 & 0 & 0 & 0 & 0 & 0 & 0 & 0 & 0\\
0 & 0 & 0 & 0 & 0 & 0 & 0 & 0 & 0 & 0 & 1 & 0\\
0 & 0 & 0 & 0 & 0 & 0 & 0 & 0 & 0 & 0 & 0 & 1\\
0 & 0 & 0 & 0 & 0 & 0 & 0 & 0 & 0 & 0 & 0 & 0\\
0 & 0 & 0 & 0 & 0 & 0 & 0 & 0 & 0 & 0 & 0 & 0\\
\end{array}
\right)
\]\vspace*{-.3cm}\caption{$12 \times 12$ adjacency matrix $F$ for network in Figure \ref{fig:network}.}\vspace*{-.1cm}\label{fig:F}
\end{figure}

\begin{theorem}\label{thm:m}
Given a network $G = (\m{V}, \m{E})$, let $A$, $B$, and $F$ be the encoding, decoding, and adjacency matrices, respectively. Then, the system matrix $M$ is given by
\begin{equation}
M = A (1-F)^{-1} B^T.
\end{equation}
\end{theorem}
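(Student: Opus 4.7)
The plan is to introduce a single row vector $\mathbf{Y} = [Y(e_1), Y(e_2), \ldots, Y(e_{|\m{E}|})]$ collecting the port processes, show that it satisfies a single linear recursion
\[
\mathbf{Y} = \mathbf{Y} F + \m{X}(S)\, A,
\]
invert the recursion using the fact (already noted just before the theorem) that $I - F$ is invertible in the acyclic case, and then apply $B^T$ to recover $\m{Z}$. Combining the three identities yields $\m{Z} = \m{X}(S)\, A (I-F)^{-1} B^T$, which forces $M = A(I-F)^{-1} B^T$ since $\m{X}(S)$ consists of independent processes.

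The first step is to verify column by column that $\mathbf{Y} = \mathbf{Y} F + \m{X}(S) A$ is equivalent to the triple of defining equations (\ref{eq:y}), (\ref{eq:y-s}) and the superposition rule $Y(e') = \sum_{(e,e')\in\m{E}} Y(e)$ at input ports. I would split into cases on what kind of port $e_j$ is. If $e_j$ is an output port of an internal node $V$, then the nonzero entries of column $j$ of $F$ are precisely the $\beta_{(e_i, e_j)}$ with $e_i \in I(V)$ and $A_{\cdot, j} = 0$, so the $j$-th coordinate of $\mathbf{Y} F + \m{X}(S) A$ recovers (\ref{eq:y}). If $e_j \in O(S)$, the additional nonzero column of $A$ contributes $\sum_i \alpha_{(i,e_j)} X(S,i)$, recovering (\ref{eq:y-s}). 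If $e_j$ is an input port, then the nonzero entries of column $j$ of $F$ are the $0/1$ entries from the edge set $\m{E}$, and the sum reproduces the MAC/broadcast superposition. The key conceptual point, which requires care, is that the same matrix $F$ simultaneously encodes two different kinds of constraints (fixed $0/1$ adjacencies for broadcast/MAC, and free variables $\beta_{(e',e)}$ for internal coding), and one must check that both get absorbed correctly by the single recursion.

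The second step is routine: rewriting the recursion as $\mathbf{Y}(I - F) = \m{X}(S) A$, I would invoke the fact from the paragraph preceding the theorem that $F$ is nilpotent (equivalently, after topologically ordering ports so that adjacency points ``forward,'' $I - F$ is upper triangular with $1$'s on the diagonal and determinant $1$), so $(I - F)^{-1} = I + F + F^2 + \cdots$ exists. Hence $\mathbf{Y} = \m{X}(S)\, A (I - F)^{-1}$. The intuition $(I-F)^{-1} = \sum_k F^k$ also gives a transparent combinatorial interpretation: the $(i,j)$ entry sums the ``gains'' of all directed port-paths from $e_i$ to $e_j$, consistent with standard network-coding transfer-matrix arguments.

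The third step is to apply the decoding relation. By definition of $B$, its $(i, (T_j, k))$ entry equals $\epsilon_{(e_i,(T_j,k))}$ when $e_i \in I(T_j)$ and is $0$ otherwise, so $(\mathbf{Y} B^T)_{(T_j, k)} = \sum_{e_i \in I(T_j)} \epsilon_{(e_i,(T_j,k))} Y(e_i)$, which is exactly (\ref{eq:z}). Therefore $\m{Z} = \mathbf{Y} B^T = \m{X}(S)\, A(I-F)^{-1} B^T$, and since this must hold as a formal identity in the independent symbols $X(S,i)$, the system matrix is $M = A(I-F)^{-1} B^T$. The main obstacle is really just the first step: making the case analysis on port type clean enough to convince the reader that the hybrid ``topology $+$ coding'' definition of $F$ in (\ref{eq:F}) collapses the ADT broadcast/MAC bookkeeping into the same compact form that Koetter–Médard obtained for unconstrained networks.
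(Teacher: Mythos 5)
Your proof is correct and follows essentially the same route as the paper, which simply defers to the transfer-matrix argument of Theorem 3 in Koetter--M\'edard: set up the recursion $\mathbf{Y} = \mathbf{Y}F + \m{X}(S)A$, invert $I-F$ using nilpotency of $F$ in the acyclic case, and apply $B^T$. Your case analysis on port types (output of internal node, output of source, input port) is exactly the bookkeeping needed to adapt that argument to the ADT-specific hybrid definition of $F$, and it is sound.
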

\begin{proof}
The proof of this theorem is similar to that of Theorem 3 in \cite{algebraic}. As previously mentioned, $(I-F)^{-1} = (I + F + F^2 + ...)$ always exists for an acyclic network $G$.
\end{proof}

Note that the algebraic framework shows a clear separation between the given physical constraints (fixed 0-1 entries of $F$ showing the topology and the broadcast/MAC constraints), and the coding decisions. As mentioned previously, we can freely choose the coding variables $\alpha_{(i, e_j)}$, $\epsilon_{(e_i, (T_j, k))}$, and $\beta_{(e_i, e_j)}$. Thus, solvability of $(G, \m{C})$ is equivalent to assigning values to $\alpha_{(i, e_j)}$, $\epsilon_{(e_i, (T_j, k))}$, and $\beta_{(e_i, e_j)}$ such that each receiver $T \in \m{T}$ is able to decode the data it is intended to receive.

\begin{example}
We can combine the matrices $F$, $A$, and $B$ from Examples \ref{ex:F}, \ref{ex:A}, and \ref{ex:B} respectively to obtain the system matrix $M = A(I-F)^{-1}B^T$ for the network in Figure \ref{fig:network}. We show a schematic of the system matrix $M$ in Figure \ref{fig:systemmatrix}.
\end{example}

\begin{figure}[tbp]
\begin{center}
\includegraphics[width=0.4\textwidth]{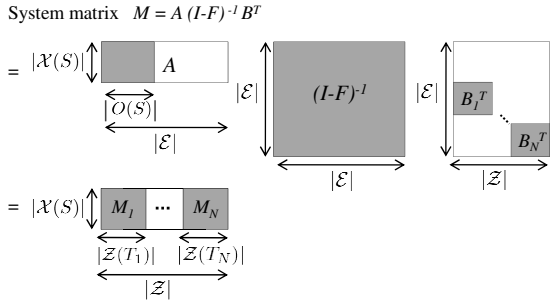}\vspace*{-.15cm}
\caption{The system matrix $M$ and it's components $A$, $(I-F)^{-1}$, and $B$ for a single multicast connection with source $S$ and destinations $T_i$, $i\in [1, N]$. 
}\label{fig:systemmatrix}\vspace*{-.5cm}
\end{center}
\end{figure}

\section{Definition of Min-cut}\label{sec:mincut}

Consider a source $S$ and a destination $T$. Reference \cite{adt1} proves the maximal achievable rate to be the minimum value of all $S$-$T$ cuts, denoted $mincut(S,T)$, which we reproduce below in Definition \ref{def:mincut}.

\begin{definition}\label{def:mincut}\cite{adt1}\cite{adt2} A cut $\Omega$ between a source $S$ and a destination $T$ is a partition of the vertices into two disjoint sets $\Omega$ and $\Omega^c$ such that $S \in \Omega$ and $T \in \Omega^c$. For any cut, $G_{\Omega}$ is the incidence matrix associated with the bipartite graph with ports in $\Omega$ and $\Omega^c$. Then, the capacity of the given ADT network (equivalently, $mincut(S,T)$) is defined as
\[
mincut(S,T) = \min_{\Omega} rank(G_{\Omega}).
\]
This capacity of $mincut(S,T)$ can be achieved using linear operations for a single unicast/multicast connection.\hspace*{1cm} $\blacksquare$
\end{definition}

Note that, with the above definition, in order to compute $mincut(S,T)$, we need to optimize over all cuts between $S$ and $T$. In addition, the proof of achievability in \cite{adt1} is not constructive, as it assumes infinite block length and does not consider the details of internal node operations.

We introduce a new algebraic definition of the min-cut, and show that it is equivalent to that of Definition \ref{def:mincut}.

\begin{theorem}\label{thm:mincut}
The capacity of the given ADT, equivalently the minimum value of all $S-T$ cuts $mincut(S,T)$, is
\begin{align*}
mincut(S,T) &= \min_{\Omega} \text{rank}(G_{\Omega})\\
&= \max_{\alpha_{(i, e)}, \beta_{(e', e)},\epsilon_{(e',i)}} \text{rank}(M).
\end{align*}
\end{theorem}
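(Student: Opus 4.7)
My plan is to prove the two inequalities $\max_{\alpha,\beta,\epsilon} \text{rank}(M) \le \min_\Omega \text{rank}(G_\Omega)$ and $\max_{\alpha,\beta,\epsilon} \text{rank}(M) \ge \min_\Omega \text{rank}(G_\Omega)$ separately. The converse will come from factoring the system matrix through an arbitrary cut, and achievability will follow by combining Theorem \ref{thm:m} with the ADT Min-cut Max-flow result of \cite{adt1}\cite{adt2} restated in Definition \ref{def:mincut}.

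For the converse, fix any $S$-$T$ cut $\Omega$ and reorder ports so that those in $\Omega$ are listed before those in $\Omega^c$. Because the ADT network is acyclic, the adjacency matrix becomes block upper-triangular,
\[
F = \begin{pmatrix} F_\Omega & C \\ 0 & F_{\Omega^c} \end{pmatrix},
\]
where the block $C$ holds exactly the port-to-port connections that cross the cut and, up to zero padding, coincides with the bipartite incidence matrix $G_\Omega$. Using the Neumann series $(I-F)^{-1} = I+F+F^2+\cdots$ guaranteed by nilpotence of $F$, a direct block computation gives
\[
(I-F)^{-1} = \begin{pmatrix} (I-F_\Omega)^{-1} & (I-F_\Omega)^{-1}\,C\,(I-F_{\Omega^c})^{-1} \\ 0 & (I-F_{\Omega^c})^{-1}\end{pmatrix}.
\]
Since $A$ is supported on source-output ports (all in $\Omega$) and $B$ on destination-input ports (all in $\Omega^c$), only the off-diagonal block survives the product $A(I-F)^{-1}B^T$, yielding $M = A_\Omega (I-F_\Omega)^{-1}\,C\,(I-F_{\Omega^c})^{-1}\,B_{\Omega^c}^T$. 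Sub-multiplicativity of rank then gives $\text{rank}(M) \le \text{rank}(C) = \text{rank}(G_\Omega)$ for every cut $\Omega$ and every choice of the free variables; minimizing over $\Omega$ and maximizing over assignments yields the desired inequality.

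For the achievability direction, I would invoke the ADT Min-cut Max-flow theorem of Definition \ref{def:mincut}, which produces a linear coding scheme at every node realizing $k := \min_\Omega \text{rank}(G_\Omega)$ independent symbols from $S$ to $T$. Each such per-node linear map is exactly an assignment to the free entries $\beta_{(e',e)}$ of $F$, and the accompanying source encoding and destination decoding are assignments to $\alpha_{(i,e)}$ and $\epsilon_{(e',(T,i))}$, respectively. Under this joint assignment, $T$ recovers the $k$ independent source symbols, which forces a $k \times k$ submatrix of $M$ to be invertible and hence $\text{rank}(M) \ge k$. Combined with the converse, this gives the claimed equality.

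The main obstacle I anticipate is the achievability step. The proofs in \cite{adt1}\cite{adt2} are block-based and not phrased in terms of a symbol-level system matrix, so care is needed to check that their linear operations actually lift to a single consistent assignment of $\alpha,\beta,\epsilon$ without invoking time-sharing across blocks. A cleaner alternative, closer to \cite{algebraic}, is to treat the free entries of $A$, $F$, $B$ as formal indeterminates and argue that some $k \times k$ minor of $M$ is a non-identically-zero polynomial, so that Schwartz--Zippel guarantees a nonzero evaluation over a sufficiently large $\mathbb{F}_q$. The subtle point absent from \cite{algebraic} is that several entries of $F$ are \emph{fixed} to $0$ or $1$ by the broadcast/MAC structure; one must therefore verify that these fixings do not accidentally annihilate every such minor, which is precisely where the cut-rank assumption $\text{rank}(G_\Omega) \ge k$ needs to be exploited.
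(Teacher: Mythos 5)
Your achievability direction is essentially the paper's own argument (read the linear scheme guaranteed by the ADT theorem quoted in Definition~\ref{def:mincut} off as an assignment of $\alpha,\beta,\epsilon$ and conclude that an $R\times R$ submatrix of $M$ is invertible), and the lifting concern you raise there is a step the paper also takes on faith. The genuine problem is in your converse. You assert that for \emph{any} $S$-$T$ cut $\Omega$, listing the ports of $\Omega$ first makes $F$ block upper-triangular ``because the network is acyclic.'' That is false: acyclicity gives triangularity with respect to a topological order, but a cut in Definition~\ref{def:mincut} is an arbitrary vertex partition with $S\in\Omega$, $T\in\Omega^c$, and it may contain edges from $\Omega^c$ into $\Omega$; those entries sit exactly in the lower-left block you set to zero. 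This is not a harmless omission, because the minimizing cut can be forced to be of this type, so your argument only proves $\text{rank}(M)\le$ the minimum over back-edge-free cuts, which can strictly exceed $\min_\Omega \text{rank}(G_\Omega)$. Concretely: let $S$ have two parallel edges to $A$, let $A$ have one edge to $T$, and let a node $U$ with no inputs have one edge into $A$ and one edge into $T$. Then $\min_\Omega \text{rank}(G_\Omega)=1$, attained only at $\Omega=\{S,A\}$, which contains the back edge from $U$ to $A$; every back-edge-free cut ($\{S\}$, $\{S,U\}$, $\{S,A,U\}$) has value at least $2$. The inequality you actually need, $\text{rank}(M)\le 1$, is precisely the case your factorization does not cover.

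The gap is fixable inside your approach: even when the back-edge block $E$ is nonzero, the $(\Omega,\Omega^c)$ block of $(I-F)^{-1}$ still factors through the forward cross-cut block $C$. Splitting every walk from an $\Omega$-port to an $\Omega^c$-port at its first crossing edge gives $[(I-F)^{-1}]_{\Omega,\Omega^c}=(I-F_\Omega)^{-1}\,C\,[(I-F)^{-1}]_{\Omega^c,\Omega^c}$ (equivalently, use the block-inverse/Schur-complement formula $(I-F_\Omega)^{-1}C\bigl((I-F_{\Omega^c})-E(I-F_\Omega)^{-1}C\bigr)^{-1}$), and hence $\text{rank}(M)\le \text{rank}(C)=\text{rank}(G_\Omega)$ for \emph{every} cut, which is the bound you want. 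For comparison, the paper avoids this factorization entirely: its converse direction merely observes that $\text{rank}(M)$ is an achievable rate and invokes the ADT cut-set converse, and its achievability direction is the one you gave, so your route is more self-contained algebraically once the back-edge issue is repaired.
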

\begin{proof}
By \cite{adt1}, we know that $mincut(S,T) = \min_{\Omega} \text{rank}(G_{\Omega})$. Therefore, we show that $\max_{\alpha, \beta,\epsilon} \text{rank}(M)$ is equivalent to the maximal achievable rate in an ADT network.

First, we show that $mincut(S,T) \geq  \max_{\alpha, \beta,\epsilon} \text{rank}(M)$. In our algebraic formulation, $\m{Z}(T) = \m{X}(S)M$; thus, the rank of $M$ represents the rate achieved. Let $R = \max_{\alpha, \beta,\epsilon} \text{rank}(M)$. Then, there exists an assignment of $\alpha_{(i, e)}, \beta_{(e', e)},$ and $\epsilon_{(e',i)}$ such that the network achieves a rate of $R$. By the definition of min-cut, it must be the case that $mincut(S,T) \geq R$.

Second, we show that $mincut(S,T) \leq  \max_{\alpha, \beta,\epsilon} \text{rank}(M)$. Assume that $R = mincut(S,T)$. Then, by \cite{adt1}\cite{adt2}, there exists a linear configuration of the network such that we can achieve a rate of $R$ such that the destination node $T$ is able to reproduce $\m{X}(S,T)$. This configuration of the network provides a linear relationship of the source-destination processes (actually, the resulting system matrix is an identity matrix); thus, an assignment of the variables $\alpha_{(i, e)}, \beta_{(e', e)}$, and $\epsilon_{(e',i)}$ for our algebraic framework. We denote $M'$ to be the system matrix corresponding to this assignment. Note that, by the definition, $M'$ is an $R\times R$ matrix with a rank of $R$. Therefore, $\max_{\alpha, \beta,\epsilon} \text{rank}(M) \geq \text{rank}(M') = mincut(S,T)$.
\end{proof}

The system matrix $M$ 
depends not only on the structure of the ADT network, but also on the field size used, nodes' internal operations, transmission rate, and connectivity. For example, the network topology may change with a choice of larger field size, since larger field sizes result in fewer parallel edges/channels.
Another example, if we adjust the rate such that $|\m{X}(S)| \leq mincut(S,T)$, then $M$ has full-rank. However, if $|\m{X}(S)|> mincut(S,T)$, then $M$ may have rank of $mincut(S,T)$ but not be full-rank. It is important to note that, in ADT networks, the cut value may not equal to the graph theoretical cut value (see Figure 2 in \cite{fragouli}).

\section{Min-cut Max-flow Theorem}\label{sec:singlesource}

In this section, we provide an algebraic interpretation of the Min-cut Max-flow theorem for a single unicast connection and a single multicast connection \cite{adt1}\cite{adt2}. This result is a direct consequence of \cite{algebraic} when applied to the algebraic formulation for the ADT network. In addition, we show that a distributed randomized coding scheme achieves capacity for these connections.

\begin{theorem}[Min-cut Max-flow Theorem] Given an acyclic network $G$ with a single connection $c = (S, T, \m{X}(S, T))$ of rate $R(c) = |\m{X}(S, T)|$, the following are equivalent.
\begin{enumerate}
\item A unicast connection $c$ is feasible.
\item $mincut(S,T) \geq R(c)$.
\item There exists an assignment of $\alpha_{(i, e_j)}$, $\epsilon_{(e_i, (T_j, k))}$, and $\beta_{(e_i, e_j)}$ such that the $R(c) \times R(c)$ system matrix $M$ is invertible in $\mathbb{F}_q$ (\ie $\det(M) \ne 0$).
\end{enumerate}\label{thm:mincut_maxflow}
\end{theorem}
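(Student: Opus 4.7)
The plan is to prove the three conditions equivalent by a cyclic chain $(1) \Rightarrow (2) \Rightarrow (3) \Rightarrow (1)$. Theorem~\ref{thm:mincut} has already done most of the structural work by identifying the combinatorial quantity $mincut(S,T)$ with the algebraic quantity $\max_{\alpha,\beta,\epsilon} \text{rank}(M)$, so the remaining task is to connect these to the feasibility of a rate-$R(c)$ unicast connection.

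First I would handle $(1) \Rightarrow (2)$, which is the classical converse: any successful connection must push $R(c)$ linearly independent processes through every $S$-$T$ cut, so $R(c) \leq \text{rank}(G_\Omega)$ for every cut $\Omega$, hence $R(c) \leq mincut(S,T)$. This is essentially the cut-set bound already established in \cite{adt1}\cite{adt2} and needs no new argument in the algebraic language. Next, $(2) \Rightarrow (3)$ reduces immediately to Theorem~\ref{thm:mincut}: since $mincut(S,T) \geq R(c)$, there exists an assignment of the coding variables under which $\text{rank}(M) \geq R(c)$; but with $|\m{X}(S,T)|=R(c)$ source processes and $R(c)$ outputs selected at $T$ through $B$, the matrix $M$ is $R(c) \times R(c)$, so rank at least $R(c)$ forces $M$ to be invertible, i.e.\ $\det(M)\ne 0$. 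Finally, $(3) \Rightarrow (1)$ is constructive: if $M$ is invertible, then $\m{Z}(T)=\m{X}(S)M$ is a bijection between the source processes and what $T$ observes, so the receiver recovers $\m{X}(S,T)=\m{Z}(T)M^{-1}$ by a fixed linear post-processing step, and the connection $c$ is established per the model in Section~\ref{sec:model}.

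The main obstacle I anticipate is subtle but important: Theorem~\ref{thm:mincut} guarantees only the existence of a rank-achieving assignment, while condition $(3)$ requires invertibility in a specific $\mathbb{F}_q$. Following the template of \cite{algebraic}, I would argue that $\det(M)$, viewed as a polynomial in the free variables $\alpha_{(i,e_j)}, \beta_{(e_i,e_j)}, \epsilon_{(e_i,(T,k))}$, is not identically zero whenever $mincut(S,T)\geq R(c)$, because Theorem~\ref{thm:mincut} (via \cite{adt1}\cite{adt2}) exhibits at least one assignment that makes it nonzero. A Schwartz-Zippel argument, bounding the total degree in terms of $|\m{E}|$, then ensures that a valid $\mathbb{F}_q$ assignment exists whenever $q$ exceeds this degree, and that a uniformly random choice succeeds with probability at least $1 - d/q$. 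This last observation directly yields the companion randomized-coding statement promised at the start of the section. The only wrinkle relative to \cite{algebraic} is that $F$ here has a partially fixed sparsity pattern imposed by the broadcast hyperedges and MAC columns; however, the polynomial-non-vanishing argument is insensitive to this, since the fixed $0/1$ entries only restrict which $\beta$-variables appear, and the exhibited achieving assignment certifies that the resulting polynomial is still nonzero.
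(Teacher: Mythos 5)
Your proposal is correct and takes essentially the same route as the paper: the paper likewise defers the equivalence of (1) and (2) to the ADT results \cite{adt1}\cite{fragouli}\cite{goemans} and invokes Theorem~\ref{thm:mincut} to conclude that for $R(c) \leq mincut(S,T)$ the $R(c)\times R(c)$ system matrix $M$ is full rank and hence invertible. Your explicit $(3)\Rightarrow(1)$ step and the Schwartz--Zippel field-size argument merely spell out details the paper leaves implicit (handled there by absorbing decoding into $B$ and by citing \cite{rlc} in the accompanying random-coding corollary).
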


\begin{proof}
Statements 1) and 2) have been shown to be equivalent in ADT network models \cite{adt1}\cite{fragouli}\cite{goemans}. From Theorem \ref{thm:mincut}, we have shown the equivalence between $mincut(S,T) =  \max_{\alpha, \beta,\epsilon} \text{rank}(M)$. Therefore, for any rate $R(c) \leq mincut(S,T)$, $M$ is a full-rank square matrix. Thus, $M$ is invertible.
\end{proof}

\begin{corollary}[Random Coding for Unicast]Consider an ADT network problem with a single connection $c = (S, T, \m{X}(S, T))$ of rate $R(c) = |\m{X}(S, T)| \leq mincut(S,T)$. Then, random linear network coding, where some or all code  variables $\alpha_{(i, e_j)}$, $\epsilon_{(e_i, (T_j, k))}$, and $\beta_{(e_i, e_j)}$ are chosen independently and uniformly over all elements of $\mathbb{F}_q$, guarantees decodability at destination node $T$ with high probability at least $(1-\frac{1}{q})^{\eta}$, where $\eta$ is the number of links carrying random combinations of the source processes.
\end{corollary}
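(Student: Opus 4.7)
The plan is to reduce the decodability question to the non-singularity of the $R(c)\times R(c)$ system matrix $M$, and then to bound the probability that random choices of the free coding variables yield a non-singular $M$. By Theorem \ref{thm:mincut_maxflow}, the unicast connection $c$ is decodable at $T$ if and only if $\det(M)\neq 0$, where $M = A(I-F)^{-1}B^{T}$. Since $G$ is acyclic, $\det(I-F) = 1$, so $\det(M)$, viewed as a function of the free code variables $\alpha_{(i,e_j)},\beta_{(e_i,e_j)},\epsilon_{(e_i,(T_j,k))}$, is a polynomial over $\mathbb{F}_q$. Moreover, the equivalence between (2) and (3) in Theorem \ref{thm:mincut_maxflow} exhibits at least one assignment making $\det(M)\neq 0$, so this polynomial is not identically zero.

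A direct Schwartz--Zippel bound on $\Pr[\det(M)=0]$ would already prove positive success probability, but the promised $(1-1/q)^{\eta}$ rate requires the sharper incremental-coding argument of Ho \etal. First I would topologically sort $G$ (possible because $G$ is acyclic) and process the $\eta$ edges carrying random combinations of the source processes in this order. For each prefix of processed edges, I would maintain the invariant that on every $S$-$T$ cut whose active edges have all been assigned, the assigned coding vectors span a subspace of $\mathbb{F}_q^{R(c)}$ of dimension equal to the cut size (capped at $R(c)$); since $R(c)\le mincut(S,T)$, the existential result of Theorem \ref{thm:mincut_maxflow} ensures this invariant is \emph{achievable} at every prefix. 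The stepwise probability analysis then shows that the random vector assigned to the next edge fails to preserve the invariant only if it lies in a strictly proper subspace that is determined by the previously fixed coefficients and by the fixed $0$-$1$ entries of $F$; such a fall has probability at most $1/q$ because the random coefficients are chosen uniformly in $\mathbb{F}_q$. Taking the product over the $\eta$ random edges gives a success probability of at least $(1-1/q)^{\eta}$.

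The main obstacle will be verifying that the Ho \etal argument, originally formulated in the classical Koetter--M\'{e}dard setting where $F$ reflects only network topology, carries over to the ADT formulation in which $F$ has \emph{partially fixed} $0$-$1$ entries encoding broadcast hyperedges and MAC superposition. The key observation is that those fixed entries only dictate \emph{which} edges are free to carry random combinations (and hence what $\eta$ counts); they do not introduce any additional randomness or dependence between the free variables. Consequently, the inductive rank/cut argument depends only on acyclicity of $G$ and on the uniform independent choice of the $\eta$ free variables, both of which are preserved, so the bound $(1-1/q)^{\eta}$ follows.
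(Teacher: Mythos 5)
Your proposal is correct and follows essentially the same route as the paper: establish feasibility of the unicast connection via Theorem \ref{thm:mincut_maxflow} (some assignment makes $\det(M)\ne 0$), then invoke the random linear network coding result of Ho \etal\ to obtain the $(1-\frac{1}{q})^{\eta}$ bound. The paper simply cites that result, whereas you additionally sketch its internal induction and check that the partially fixed $0$-$1$ entries of $F$ in the ADT formulation do not disturb it, which is a useful but inessential elaboration of the same argument.
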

\begin{proof}From Theorem \ref{thm:mincut_maxflow}, there exists an assignment of $\alpha_{(i, e_j)}$, $\epsilon_{(e_i, (T_j, k))}$, and $\beta_{(e_i, e_j)}$ such that $\det(M)\ne 0$, which gives a capacity-achieving network code for the given $(G, \m{C})$. Thus, this connection $c$ is feasible for the given network. Reference \cite{rlc} proves that random linear network coding is capacity-achieving and guarantees decodability with high probability $(1-\frac{1}{q})^{\eta}$ for such feasible unicast connection $c$.
\end{proof}

%

\begin{theorem}[Single Multicast Theorem] Given an acyclic network $G$ and connections $\m{C}= \{(S, T_1, \m{X}(S)),$ $(S, T_2,$ $\m{X}(S)),$ $..., (S, T_N, \m{X}(S))\}$, $(G, \m{C})$ is solvable if and only if $mincut(S, T_i) \geq |\m{X}(S)|$ for all $i$. \label{thm:singlemulticast}
\end{theorem}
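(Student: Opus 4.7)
The plan is to prove the two directions separately, leaning heavily on Theorem \ref{thm:mincut_maxflow} and the algebraic framework of Theorem \ref{thm:m}.

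For necessity, suppose $(G, \m{C})$ is solvable. Fix any $i \in \{1, \dots, N\}$ and consider the single connection $c_i = (S, T_i, \m{X}(S))$. Solvability of $(G, \m{C})$ implies in particular that $c_i$ is feasible, so applying Theorem \ref{thm:mincut_maxflow} to this single unicast connection gives $mincut(S, T_i) \geq |\m{X}(S)|$. Since $i$ was arbitrary, the min-cut bound holds for every destination.

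For sufficiency, assume $mincut(S, T_i) \geq |\m{X}(S)|$ for all $i$. For each destination $T_i$ let $M_i = A(I-F)^{-1}B_i^T$ denote the system matrix corresponding to the pair $(S, T_i)$, where $B_i$ is the block of the global decoding matrix acting at $T_i$. Theorem \ref{thm:mincut} tells us that $\max \text{rank}(M_i) = mincut(S,T_i) \geq |\m{X}(S)|$, and Theorem \ref{thm:mincut_maxflow} further tells us that for each $i$ separately there exist values of the code variables making $M_i$ a full-rank $|\m{X}(S)| \times |\m{X}(S)|$ matrix, so $\det(M_i)$ is a nonzero polynomial in the indeterminates $\{\alpha_{(\cdot,\cdot)}, \beta_{(\cdot,\cdot)}, \epsilon_{(\cdot,\cdot)}\}$ over $\mathbb{F}_q$. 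The crucial point is that the code variables at intermediate nodes (the $\beta$'s and the source's $\alpha$'s) are \emph{shared} across all destinations; only the decoding coefficients $\epsilon$ at each $T_i$ are local. Consider the product polynomial
\begin{equation*}
P \;=\; \prod_{i=1}^{N} \det(M_i).
\end{equation*}
Each factor is a nonzero polynomial, hence $P$ itself is a nonzero polynomial over $\mathbb{F}_q$. By the Schwartz--Zippel lemma (or the sparse-zeros argument of Koetter--M\'edard \cite{algebraic}), for $q$ sufficiently large there is an assignment of all the code variables that makes $P \neq 0$, which simultaneously makes every $\det(M_i) \neq 0$. Such an assignment allows each $T_i$ to invert its system matrix and recover $\m{X}(S)$, establishing simultaneous decodability.

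The main obstacle is the sufficiency direction, and specifically the passage from ``for each $i$ separately there is a good code'' to ``there is one code good for all $i$ simultaneously,'' since the shared nature of $\alpha$ and $\beta$ couples the destinations. I expect this to be handled cleanly by the product-polynomial/Schwartz--Zippel argument inherited from \cite{algebraic}, with the only quantitative price being the required field size $q$; if $q$ is too small one can always replace $\mathbb{F}_q$ by an extension field or, equivalently in the ADT setting, combine parallel binary channels into vector symbols as discussed in Section \ref{sec:model}. The necessity direction is essentially immediate from the single-unicast result and requires no additional argument.
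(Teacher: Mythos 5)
Your proof is correct and follows essentially the same route as the paper: decompose the system matrix into per-receiver blocks $M_i$, use Theorem \ref{thm:mincut_maxflow} to make each $\det(M_i)$ individually nonzero, and then argue simultaneous non-vanishing despite the shared $\alpha,\beta$ variables. The only difference is cosmetic: the paper invokes the result of \cite{rlc} (sufficient field size $q > N$) for the simultaneous step, whereas you spell out the underlying product-polynomial/sparse-zeros argument explicitly.
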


\begin{proof} If $(G, \m{C})$ is solvable, then $mincut(S, T_i) \geq |\m{X}(S)|$. Therefore, we only have to show the converse. Assume $mincut(S, T_i) \geq |\m{X}(S)|$ for all $i \in [1, N]$. The system matrix $M = \{M_i\}$ is a concatenation of $|\m{X}(S)| \times |\m{X}(S)|$ matrix where $\m{Z}(T_i) = \m{X}(S) M_i$, as shown in Figure \ref{fig:systemmatrix}. We can write $M = [M_1, M_2, ..., M_N] = A (I-F)^{-1} B^T = A(I-F)^{-1}[B_1, B_2, ..., B_N]$. Thus, $M_i = A(I-F)^{-1}B_i$. Note that $A$ and $B_i$'s do not substantially contribute to the system matrix $M_i$ since $A$ and $B_i$ only perform linear encoding and decoding at the source and destinations, respectively.

By Theorem \ref{thm:mincut_maxflow}, there exists an assignment of $\alpha_{(i, e_j)}$, $\epsilon_{(e_i, (T_j, k))}$, and $\beta_{(e_i, e_j)}$ such that each individual system submatrix $M_i$ is invertible, \ie $\det{(M_i)} \ne 0$. However, an assignment that makes $\det{(M_i)} \ne 0$ may lead to $\det{(M_j)} = 0$ for $i\ne j$. Thus, we need to show that it is possible to achieve \emph{simultaneously} $\det{(M_i)} \ne 0$ for all $i$. By \cite{rlc}, we know that if the field size is larger than the number of receivers ($q > N$), then there exists an assignment of $\alpha_{(i, e_j)}$, $\epsilon_{(e_i, (T_j, k))}$, and $\beta_{(e_i, e_j)}$ such that $\det{(M_i)} \ne 0$ for all $i$.
\end{proof}

\begin{corollary}[Random Coding for Multicast] Consider an ADT network problem with a single multicast connection $\m{C}= \{(S,T_1, \m{X}(S)), (S, T_2, \m{X}(S)), ..., (S, T_N, \m{X}(S))\}$ with $mincut(S,T_i) \geq |\m{X}(S)|$ for all $i$.  Then, random linear network coding, where some or all code variables $\alpha_{(i, e_j)}$, $\epsilon_{(e_i, (T_j, k))}$, and $\beta_{(e_i, e_j)}$ are chosen independently and uniformly over all elements of $\mathbb{F}_q$, guarantees decodability at destination node $T_i$ for all $i$ simultaneously with high probability at least $(1-\frac{N}{q})^{\eta}$, where $\eta$ is the number of links carrying random combinations of the source processes; thus, $\eta \leq |\m{E}|$.\label{thm:coding_multicast}
\end{corollary}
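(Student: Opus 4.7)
The plan is to reduce the bound to the random linear network coding analysis of reference \cite{rlc}, using Theorem \ref{thm:singlemulticast} as the existence step. Specifically, the hypothesis $mincut(S,T_i) \geq |\m{X}(S)|$ for all $i$, combined with the field-size condition $q > N$ implicit in the claim, lets me invoke Theorem \ref{thm:singlemulticast} to conclude that there exists at least one assignment of $\alpha_{(i,e_j)}$, $\beta_{(e_i,e_j)}$, $\epsilon_{(e_i,(T_j,k))}$ that makes $\det(M_i)\ne 0$ for every $i=1,\dots,N$ simultaneously. Equivalently, the polynomial $P = \prod_{i=1}^N \det(M_i)$ in the free coding variables is not identically zero over $\mathbb{F}_q$, so the problem reduces to lower bounding the probability that a uniformly random assignment evaluates $P$ to a nonzero element.

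For this I would invoke the inductive argument of \cite{rlc} directly, applied to our algebraic formulation. The two ingredients I need are: (i) for each receiver $T_i$, $\det(M_i)$ is a \emph{multilinear} polynomial in the free code variables, i.e.\ each random variable appears with degree at most one; and (ii) the $\eta$ random variables can be processed one at a time so that, conditioned on previous choices keeping $P$ nonzero, the probability that the next random variable causes $P$ to vanish is at most $N/q$, giving the claimed bound $(1 - N/q)^\eta$. Ingredient (i) follows from the path/line-graph expansion of $A(I-F)^{-1}B_i^{T}$ developed in \cite{algebraic}, while ingredient (ii) is the standard step-by-step argument of \cite{rlc}. Finally, the bound $\eta \leq |\m{E}|$ is immediate because each random coefficient $\beta_{(e',e)}$ corresponds to an internal port-to-port link, and $\alpha$ and $\epsilon$ variables attach to source or sink edges in $\m{E}$.

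The main obstacle I anticipate is justifying that the \cite{rlc} argument carries over despite the fact that the adjacency matrix $F$ has many entries fixed to $1$ (reflecting broadcast hyperedges and MAC superposition), whereas in \cite{algebraic} only the $0$ entries are fixed. I would address this by observing that the fixed $1$ entries carry no random variables, so the set of free variables is precisely the $\eta$ shaded entries of $F$ (as in Figure \ref{fig:F}) together with the entries of $A$ and $B$; multilinearity of $\det(M_i)$ in these $\eta$ variables still holds by expanding the determinant along the corresponding line graph, and therefore the per-edge conditional failure bound of $N/q$ from \cite{rlc} is unchanged. Multiplying over the $\eta$ random links then yields the stated probability $(1 - N/q)^\eta$ that every $T_i$ simultaneously decodes successfully.
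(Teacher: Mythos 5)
Your proposal follows essentially the same route as the paper: feasibility of the multicast connection is obtained from Theorem \ref{thm:singlemulticast}, and the probability bound $(1-\frac{N}{q})^{\eta}$ is then imported from the random linear network coding analysis of \cite{rlc}. The additional detail you supply --- multilinearity of $\det(M_i)$ in the free variables and the observation that the fixed $1$ entries of $F$ carry no randomness, so the per-variable conditioning argument of \cite{rlc} is unaffected --- is a correct elaboration of what the paper leaves implicit in its citation of \cite{rlc}.
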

\begin{proof}Given that the multicast connection is feasible (which is true by Theorem \ref{thm:singlemulticast}), reference \cite{rlc} shows that random linear network coding achieves capacity for multicast connections, and allows all destination nodes to decode the source processes $\m{X}(S)$ with high probability $(1-\frac{N}{q})^{\eta}$.
\end{proof}

\section{Extensions to other connections}\label{sec:general}

In this section, we extend the ADT network results to a more general set of traffic requirements. We use the algebraic formulation and the results from \cite{algebraic} to characterize the feasibility conditions for a given problem $(G, \m{C})$.

\begin{theorem}[Multiple Multicast Theorem] Given a network $G$ and a set of connections $\m{C} = \{(S_i, T_j, \m{X}(S_i))\ |\ S_i \in \m{S}, T_j \in \m{T}\}$, $(G, \m{C})$ is solvable if and only if Min-cut Max-flow bound is satisfied for any cut between source nodes $\m{S}$ and a destination $T_j$, for all $T_j \in \m{T}$.\label{thm:multiplemulticast}
\end{theorem}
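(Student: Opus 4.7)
The plan is to reduce the multiple multicast problem to a single multicast problem via the super-source construction of Figure~\ref{fig:supersource} and then invoke Theorem~\ref{thm:singlemulticast}, restricting the resulting code back to $G$.

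The necessity direction is the standard cut bound: since the $\m{X}(S_i)$'s are independent and every destination must recover all of them, any cut $\Omega$ separating $\m{S}$ from $T_j$ must have $\text{rank}(G_\Omega) \geq \sum_i |\m{X}(S_i)|$, as otherwise the data-processing inequality is violated at $T_j$.

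For sufficiency, I would build the augmented network $G'$ by adjoining a super-source $S$ and, for each $S_i\in\m{S}$, a bundle of $|\m{X}(S_i)|$ non-broadcast, non-MAC, unit-capacity edges from fresh output ports of $S$ to fresh input ports of $S_i$. Assigning the concatenated process $\m{X}(S) = [\m{X}(S_1), \ldots, \m{X}(S_K)]$ to $S$ yields the single multicast problem $\m{C}' = \{(S, T_j, \m{X}(S)) : T_j\in\m{T}\}$. The encoding matrix at $S$ is pinned to the permutation that routes $\m{X}(S_i)$ across its bundle into $S_i$, so each $S_i$ simply replays its original source behavior; consequently any solution of $(G',\m{C}')$ restricts to a solution of $(G,\m{C})$.

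The main obstacle is showing that the hypothesis on $G$ implies $mincut_{G'}(S, T_j) \geq |\m{X}(S)|$ for every $T_j$, which is what Theorem~\ref{thm:singlemulticast} needs. Given a cut $\Omega'$ of $G'$ with $S\in\Omega'$ and $T_j\in\Omega'^c$, let $\m{S}_b = \m{S}\cap\Omega'^c$ and compare $\Omega'$ with $\Omega'' = \Omega'\cup\m{S}_b$, which pulls every source onto the $S$-side. Since no bundle edge crosses $\Omega''$, we have $\text{rank}(G'_{\Omega''}) = \text{rank}(G_{\Omega''\setminus\{S\}})$, and because $\Omega''\setminus\{S\}$ is a cut of $G$ separating $\m{S}$ from $T_j$, the hypothesis gives $\text{rank}(G'_{\Omega''}) \geq |\m{X}(S)|$. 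The delicate step is transferring this bound to $\Omega'$ itself: the fresh bundles contribute $\sum_{S_i\in\m{S}_b}|\m{X}(S_i)|$ linearly independent columns of $G'_{\Omega'}$ supported on ports disjoint from all other cut edges, and one must check that this additive contribution absorbs any rank lost when the $\m{S}_b$ vertices are pulled back to the $T_j$-side, using $|\m{X}(S_i)|\leq |O(S_i)|$ from the network model to bound the swap. Once this rank accounting is settled, Theorem~\ref{thm:singlemulticast} (with field size $q>|\m{T}|$) produces a capacity-achieving code for $(G',\m{C}')$ and hence for $(G,\m{C})$.
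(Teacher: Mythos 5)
Your overall route---adjoin a super-source, pin its encoder to a routing matrix, and invoke Theorem~\ref{thm:singlemulticast}---is exactly the paper's reduction; the paper's proof is nothing more than this construction (with $|O(S)|=\sum_i |O(S_i)|$ one-to-one edges) followed by an appeal to Theorem~\ref{thm:singlemulticast}. The problem is the step you yourself flag as ``delicate,'' and it is not merely delicate: under the reading of the hypothesis you adopt it is false. You assume only that every cut of $G$ with \emph{all} of $\m{S}$ on the source side and $T_j$ on the sink side has rank at least $|\m{X}(S)|$, and you try to deduce $mincut_{G'}(S,T_j)\geq |\m{X}(S)|$ by a rank-accounting swap of the sources in $\m{S}_b$. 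No accounting can settle this. Take two sources with $|\m{X}(S_1)|=|\m{X}(S_2)|=1$, where $S_1$ has no edges toward $T$ at all and $S_2$ has two parallel unit edges into $T$: every cut separating $\{S_1,S_2\}$ from $T$ has rank $2 = |\m{X}(S)|$, so your hypothesis holds, yet $(G,\m{C})$ is unsolvable ($T$ can never learn $\m{X}(S_1)$), and correspondingly the cut $\Omega'=\{S,S_1\}$ in your $G'$ has rank $1<2$. What is actually needed---and what the theorem's loosely worded condition must be taken to mean, since the paper applies Theorem~\ref{thm:singlemulticast} directly to the augmented network---is the per-subset bound: for every $\m{S}'\subseteq \m{S}$ and every $T_j$, $mincut(\m{S}',T_j)\geq \sum_{S_i\in \m{S}'}|\m{X}(S_i)|$, which with bundles of size $|\m{X}(S_i)|$ is precisely equivalent to $mincut_{G'}(S,T_j)\geq |\m{X}(S)|$. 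Under that reading your transfer step becomes vacuous and the argument collapses to the paper's two-line reduction; under your reading it cannot be completed.

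Two smaller remarks. First, your choice of bundles of size $|\m{X}(S_i)|$ (rather than the paper's $|O(S_i)|$) is the right one, exactly because it makes the augmented min-cut condition match the subset condition; oversized bundles inflate cuts that place a source on the sink side. Second, pinning $A$ at $S$ to a permutation and then citing Theorem~\ref{thm:singlemulticast} needs one more line, since that theorem optimizes over $A$ as well: argue that pinning is without loss because in any solution the $|\m{X}(S)|\times|\m{X}(S)|$ map from $\m{X}(S)$ to the bundle symbols must be invertible (all of $S$'s outputs are bundle edges and their number equals $|\m{X}(S)|$), so composing every decoder with its inverse yields a solution in which $A$ is a permutation, and only such solutions restrict to $(G,\m{C})$ because each $S_i$ may code only over its own processes.
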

\begin{proof} We first introduce a super-source $S$ with $|O(S)| = \sum_{S_i \in \m{S}} |O(S_i)|$, and connect each $e'_j \in O(S)$ to an input of $S_i$ such that $e_j \in O(S_i)$ as shown in Figure \ref{fig:supersource}. Then, we apply Theorem \ref{thm:singlemulticast}, which proves the statement.
\end{proof}

\begin{theorem}[Disjoint Multicast Theorem] Given an acyclic network $G$ with a set of connections $\m{C} = $ $ \{(S, T_i, $ $ \m{X}(S, T_i))$ $\ | \ i = 1,2, ...,K\}$ is called a \emph{disjoint multicast} if $\m{X}(S, T_i) \cap \m{X}(S, T_j) = \emptyset$ for all $i\ne j$. Then, $(G, \m{C})$ is solvable if and only if $mincut(S, \m{T}') \geq \sum_{T_i \in \m{T'}} |\m{X}(S, T_i)|$ for any $\m{T}' \subseteq \m{T}$.\label{thm:disjoint}
\end{theorem}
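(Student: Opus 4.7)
The plan is to reduce the disjoint multicast problem to a single unicast problem by adding a virtual super-destination, so that Theorem \ref{thm:mincut_maxflow} can be invoked, and then to push the resulting decoding back to the individual $T_i$'s.

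For the necessity direction, I would argue via a standard cut argument: if $(G,\m{C})$ is solvable, then for any $\m{T}' \subseteq \m{T}$ the aggregate information that must cross any $S$–$\m{T}'$ cut equals $H(\bigcup_{T_i \in \m{T}'} \m{X}(S, T_i)) = \sum_{T_i \in \m{T}'} |\m{X}(S, T_i)|$ since the sets $\m{X}(S, T_i)$ are disjoint, yielding $mincut(S, \m{T}') \geq \sum_{T_i \in \m{T}'} |\m{X}(S, T_i)|$.

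For sufficiency I would construct an augmented network $G^+$ from $G$ by attaching a virtual super-destination $T^*$ together with $|\m{X}(S, T_i)|$ virtual unit-capacity edges from each $T_i$ to $T^*$. The first key step is the cut identity $mincut_{G^+}(S, T^*) = |\m{X}(S)|$: any $S$–$T^*$ cut in $G^+$ is characterized by the subset $\m{T}'' \subseteq \m{T}$ placed on the $T^*$-side, and its value is $mincut_G(S, \m{T}'') + \sum_{T_i \notin \m{T}''} |\m{X}(S, T_i)|$; by the disjoint multicast hypothesis $mincut_G(S, \m{T}'') \geq \sum_{T_i \in \m{T}''} |\m{X}(S, T_i)|$, so every cut value is $\geq \sum_i |\m{X}(S, T_i)| = |\m{X}(S)|$, with equality attained by $\m{T}'' = \emptyset$. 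Theorem \ref{thm:mincut_maxflow} applied to the single unicast $(S, T^*, \m{X}(S))$ then gives an assignment of $\alpha, \beta, \epsilon$ under which the $|\m{X}(S)| \times |\m{X}(S)|$ system matrix $M^+$ from $S$ to $T^*$ is invertible.

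The final step is to extract individual decoding at each $T_i$. The input vector of $T^*$ decomposes into blocks of sizes $|\m{X}(S, T_i)|$, where the $i$-th block is exactly the vector of symbols that $T_i$ transmits on its virtual outgoing edges. Because $M^+$ is invertible, this concatenated vector of blocks is related to $\m{X}(S)$ by an invertible $|\m{X}(S)| \times |\m{X}(S)|$ matrix $L$. Composing the source encoding matrix $A$ with $L^{-1}$ — which is a legal adjustment of the $\alpha$ coefficients — turns the virtual output of $T_i$ into $\m{X}(S, T_i)$ literally. Setting the genuine decoding coefficients $\epsilon_{(e',(T_i,k))}$ of $T_i$ equal to the $\beta$ coefficients driving its virtual outputs then lets $T_i$ compute $\m{X}(S, T_i)$ directly from $I(T_i)$, which completes the construction.

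The main obstacle is this last step: invertibility of $M^+$ only says that $T^*$ jointly reconstructs $\m{X}(S)$, not that each $T_i$ individually produces its assigned subset $\m{X}(S, T_i)$. Overcoming this requires exploiting the block structure of $T^*$'s received vector together with the freedom to precode at the source, so that any invertible $M^+$ can be rotated to the identity, pinning each block of $T^*$'s input to the intended subset and thereby forcing $T_i$ to have already computed $\m{X}(S, T_i)$ internally.
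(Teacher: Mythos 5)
Your proposal is correct and follows essentially the same route as the paper: attach a super-destination, reduce to a single unicast so that Theorem~\ref{thm:mincut_maxflow} yields an invertible system matrix, and then precode at the source (choosing $A$ so the system matrix at the super-destination becomes the identity) so that each $T_i$'s outgoing (virtual) symbols are exactly $\m{X}(S,T_i)$ and can be read off locally. If anything, you are more explicit than the paper on two points it glosses over --- the cut-by-cut verification that the augmented network's $S$--$T^*$ min-cut meets the total rate, and the necessity direction --- which only strengthens the argument.
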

\begin{proof}
Create a super-destination node $T$ with $|I(T)| = \sum_{i=1}^K |I(T_i)|$, and an edge $(e, e')$ from $e\in O(T_i)$, $i\in [1, K]$ to $e'\in I(T)$, as in Figure \ref{fig:superdestin}. This converts the problem of disjoint multicast to a single-source $S$, single-destination $T$ problem with rate $\m{X}(S, T) = \sum_{T' \in \m{T}} |\m{X}(S, T)|$. The $mincut(S, T) \geq |\m{X}(S, T)|$; so, Theorem \ref{thm:mincut_maxflow} applies. Thus, it is possible to achieve a communication of rate $\m{X}(S,T)$ between $S$ and $T$. Now, we have to guarantee that the receiver $T_i$ is able to receive the exact subset of processes $\m{X}(S,T_i)$. Since the system matrix to $T$ is full rank, it is possible to carefully choose the encoding matrix $A$ such that the system matrix $M$ at super-destination node $T$ is an identity matrix. This implies that for each edge from the output ports of $T_i$ (for all $i$) to input ports of $T$ is carrying a distinct symbol, disjoint from all the other symbols carried by those  edges from output ports of $T_j$, for all $i \ne j$. Thus, by appropriately permuting the symbols at the source, $S$ can deliver the desired processes to the intended $T_i$ as shown in Figure \ref{fig:superdestin}. 
\end{proof}

\begin{figure}
\begin{center}
\includegraphics[width=0.45\textwidth]{./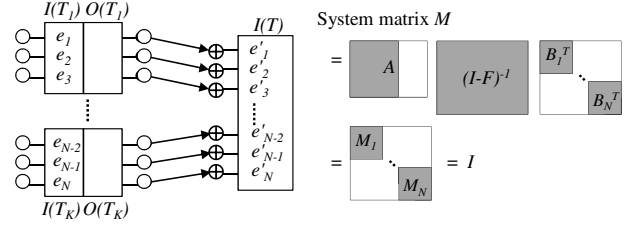}
\end{center}\vspace*{-.3cm}\caption{Disjoint multicast problem can be converted into a single destination problem by adding a super-destination $T$. The system matrix $M$ for the disjoint multicast problem is shown as well.
}\label{fig:superdestin}\vspace*{-.2cm}
\end{figure}

\begin{theorem}[Two-level Multicast Theorem] Given an acyclic network $G$ with a set of connections $\m{C} =\m{C}_{d} \cup \m{C}_{m}$ where $\m{C}_{d} = \{(S, T_i, \m{X}(S, T_i)) | \m{X}(S, T_i) \cap \m{X}(S, T_j) = \emptyset,$ $ i\ne j$, $i, j \in [1,K]\}$ is a set of disjoint multicast connections, and $\m{C}_{m} = \{(S, T_i, \m{X}(S))\ | \ i \in [K+1, N]\}$ is a set of single source multicast connections. Then, $(G, \m{C})$ is solvable if and only if the min-cut between $S$ and any $\m{T}' \subseteq \{T_1, ..., T_K\}$ is at least $\sum_{T_i \in \m{T}'} |\m{X}(S, T_i)|$, and min-cut between $S$ and $T_j$ is at least $|\m{X}(S)|$ for $j \in [K+1, N]$.\label{thm:twolevel}
\end{theorem}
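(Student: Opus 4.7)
The proof will follow the pattern of Theorems \ref{thm:mincut_maxflow}, \ref{thm:singlemulticast}, and \ref{thm:disjoint}, combining the super-destination construction of the disjoint multicast with the multicast-feasibility argument. Necessity is immediate: any cut separating $S$ from a subset $\m{T}'\subseteq\{T_1,\ldots,T_K\}$ must carry at least $\sum_{T_i\in\m{T}'}|\m{X}(S,T_i)|$ symbols since the demands $\m{X}(S,T_i)$ are disjoint, and any cut separating $S$ from $T_j$ with $j\in[K+1,N]$ must carry at least $|\m{X}(S)|$ since $T_j$ needs the entire source.

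For sufficiency, I would first build an auxiliary acyclic network $G'$ from $G$ by appending a super-destination $T^*$ exactly as in Theorem \ref{thm:disjoint}: add edges from every output port of $T_1,\ldots,T_K$ into fresh input ports of $T^*$. In $G'$ there is a single source $S$ and $N-K+1$ effective destinations: the node $T^*$, needing rate $\ell:=\sum_{i=1}^K|\m{X}(S,T_i)|$, and each of $T_{K+1},\ldots,T_N$, needing rate $|\m{X}(S)|$. The hypotheses of the theorem translate into $mincut(S,T^*)\geq\ell$ (via the disjoint-multicast cut condition applied in $G'$) and $mincut(S,T_j)\geq|\m{X}(S)|$ for every $j\in[K+1,N]$.

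Next, I would recast feasibility algebraically using Theorem \ref{thm:m}. For every $j\in[K+1,N]$, decodability at $T_j$ reduces to the $|\m{X}(S)|\times|\m{X}(S)|$ submatrix $M_j$ being invertible, and by Theorem \ref{thm:mincut_maxflow} the determinant $\det(M_j)$ is a non-identically-zero polynomial in the variables $\alpha_{(i,e)},\beta_{(e',e)},\epsilon_{(e',i)}$. For the disjoint portion at $T_1,\ldots,T_K$, the argument of Theorem \ref{thm:disjoint} shows that decodability reduces to a single further non-identically-zero polynomial condition: a particular $\ell\times\ell$ submatrix of $M_{T^*}$ must be nonsingular, after which the encoding matrix $A$ can be permuted so that each $T_i$ reads off exactly $\m{X}(S,T_i)$ on its share of the edges into $T^*$.

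The main obstacle is to exhibit a single assignment of the coding variables that makes all $N-K+1$ of these determinants simultaneously nonzero. This is precisely the situation handled by the Schwartz--Zippel/union-bound argument of \cite{algebraic,rlc} that underlies Theorem \ref{thm:singlemulticast}: the product of these nonzero polynomials is itself not identically zero, so for any sufficiently large field $\mathbb{F}_q$ a feasible assignment exists, and in fact uniformly random coefficients succeed with probability bounded away from zero. This yields both the claimed algebraic characterization and, analogously to Corollaries after Theorems \ref{thm:mincut_maxflow} and \ref{thm:singlemulticast}, a random-linear-coding achievability result for two-level multicast.
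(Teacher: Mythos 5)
Your proposal is correct and follows essentially the same route as the paper: append the super-destination of Theorem \ref{thm:disjoint} to the disjoint-multicast receivers, treat the result as a single-source multicast to $T^*$ and $T_{K+1},\ldots,T_N$ so that Theorem \ref{thm:singlemulticast} (simultaneous nonsingularity of the submatrices via the product-of-determinants argument of \cite{rlc}) applies, and then adjust the encoding matrix $A$ so that each disjoint receiver obtains exactly its intended subset. Your write-up is somewhat more explicit than the paper's (it spells out necessity and the Schwartz--Zippel step), but the decomposition and key lemmas are the same.
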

\begin{proof}
We create a super-destination $T$ for the disjoint multicast destinations as in the proof for Theorem \ref{thm:disjoint}. Then, we have a single multicast problem with receivers $T$ and $T_i$, $i\in [K+1, N]$. Theorem \ref{thm:singlemulticast} applies. By choosing the appropriate matrix $A$, $S$ can satisfy both the disjoint multicast and the single multicast requirements, as shown in Figure \ref{fig:twolevel}.
\end{proof}

Theorem \ref{thm:twolevel} does not extend to a three-level multicast.

\begin{figure}[tbp]
\begin{center}
\includegraphics[width=0.47\textwidth]{./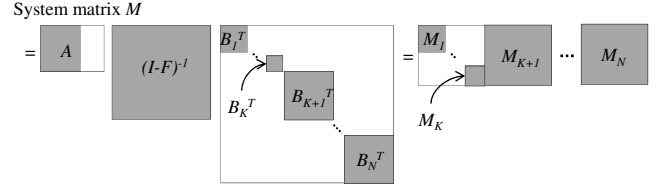}
\end{center}\vspace*{-.3cm}\caption{The system matrix $M$ for the two-level multicast problem. The structure of the system matrix $M$ is a ``concatenation'' of the disjoint multicast problem (Figure \ref{fig:superdestin}) and the single multicast problem (Figure \ref{fig:multicast}).}\label{fig:twolevel}\vspace*{-.4cm}
\end{figure}

In the theorem below, we present sufficient conditions for solvability of a general connection set. This theorem does not provide necessary conditions, as shown in \cite{insufficiency}.

\begin{theorem}[Generalized Min-cut Max-flow Theorem] Given an acyclic network $G$ with a connection set $\m{C}$, let $M = \{M_{i,j}\}$ where $M_{i,j}$ is the system matrix for source processes $\m{X}(S_i)$ to destination processes $\m{Z}(T_j)$. Then, $(G, \m{C})$ is solvable if there exists an assignment of $\alpha_{(i, e_j)}$, $\epsilon_{(e_i, (T_j, k))}$, and $\beta_{(e_i, e_j)}$ such that
\begin{enumerate}
\item $M_{i,j} = 0$ for all $(S_i, T_j, \m{X}(S_i, T_j)) \notin \m{C}$,
\item Let $(S_{\sigma(i)}, T_j, \m{X}(S_{\sigma(i)}, T_j)) \in \m{C}$ for $i \in [1, K(j)]$. Thus, this is the set of connections with $T_j$ as a receiver. Then, $[M_{\sigma(1),j}^T, M_{\sigma(2), j}^T, ...,$ $M_{\sigma(K_j), j}^T]$ is a $|\m{Z}(T_j))| \times |\m{Z}(T_j)|$ is a \emph{nonsingular}
    system matrix.
\end{enumerate}
\end{theorem}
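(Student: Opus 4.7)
The plan is to reduce to the single super-source algebraic formulation of Section~\ref{sec:algebraic}, exhibit the block decomposition of the system matrix $M$, and then invert a square linear system at each destination. First I would introduce a super-source $S$ with port-disjoint one-to-one connections to each $S_i$, exactly as in Fig.~\ref{fig:supersource} and in the proof of Theorem~\ref{thm:multiplemulticast}. By Theorem~\ref{thm:m} the overall input-output relation is then $\m{Z}=\m{X}(S)\,M$ with $M=A(I-F)^{-1}B^T$. Because $A$ has support only on rows indexed by $\m{X}(S_i)$ (and columns indexed by $O(S_i)$) while $B$ has support only on rows indexed by $I(T_j)$ (and columns indexed by $\m{Z}(T_j)$), I would argue that $M$ decomposes naturally into the announced blocks $M_{i,j}$, each relating $\m{X}(S_i)$ to $\m{Z}(T_j)$.

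Next I would fix an arbitrary destination $T_j$ and write, by the linearity imposed by Eqs.~(\ref{eq:y})--(\ref{eq:z}),
\[
\m{Z}(T_j)=\sum_{S_i\in\m{S}} \m{X}(S_i)\,M_{i,j}.
\]
Hypothesis~1 kills every summand with $(S_i,T_j,\m{X}(S_i,T_j))\notin\m{C}$, leaving only the $K_j$ intended sources $S_{\sigma(1)},\ldots,S_{\sigma(K_j)}$, so that
\[
\m{Z}(T_j)=\bigl[\m{X}(S_{\sigma(1)}),\ldots,\m{X}(S_{\sigma(K_j)})\bigr]\,\widetilde{M}_j,
\]
where $\widetilde{M}_j$ is the vertical concatenation of the surviving blocks $M_{\sigma(i),j}$. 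Since $\widetilde{M}_j^T$ is exactly the matrix $[M_{\sigma(1),j}^T,\ldots,M_{\sigma(K_j),j}^T]$ appearing in Hypothesis~2, that hypothesis says $\widetilde{M}_j$ is square of size $|\m{Z}(T_j)|$ and invertible over $\mathbb{F}_q$. Right-multiplying by $\widetilde{M}_j^{-1}$ then lets $T_j$ recover the full concatenated source vector, and hence each required sub-collection $\m{X}(S_{\sigma(i)},T_j)\subseteq\m{X}(S_{\sigma(i)})$; since $j$ was arbitrary, every connection is satisfied simultaneously and $(G,\m{C})$ is solvable.

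The only place I expect genuine care to be needed is pinning down the block decomposition of $M$: one has to check that the row-support of $A$ and the column-support of $B^T$ interact with $(I-F)^{-1}$ in a block-separable way so that $A(I-F)^{-1}B^T$ really does split as $\{M_{i,j}\}$. Everything after that is block-matrix linear algebra paralleling Theorem~3 of \cite{algebraic}. I would not attempt a converse, since the statement is only a sufficient condition (consistent with the citation to \cite{insufficiency}), and because — unlike the earlier theorems in this section — no super-source or super-destination reduction turns a general connection set into a single-multicast instance, which is precisely why the purely algebraic criterion on the individual $M_{i,j}$'s is the right way to express sufficiency here.
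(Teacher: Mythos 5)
Your proposal is correct and follows essentially the same route as the paper, which only sketches the argument and defers to Theorem~6 of \cite{algebraic}: condition~1 eliminates interference from sources not connected to $T_j$, and condition~2 gives an invertible square map at each receiver that can be undone (folded into the decoding matrix $B_j$) to recover the demanded processes. Your added care about the block decomposition is unnecessary, since $M_{i,j}=A_i(I-F)^{-1}B_j^T$ is just a row/column selection of $M$, but this does not affect correctness.
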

\begin{proof}
Note that $[M_{\sigma(1),j}^T, M_{\sigma(2), j}^T, ...,$ $M_{\sigma(K_j), j}^T]$  is a system matrix for source processes $\m{X}(S_{\sigma(i)})$, $i \in [1, K(j)]$, to destination processes $\m{Z}(T_j)$.

Condition 2) states the Min-cut Max-flow condition; thus, is necessary to establish the connections. Condition 1) states that the destination node $T_j$ should be able to distinguish the information it is intended to receive from the information that may have been mixed into the flow it receives. These two conditions are sufficient to establish all connections in $\m{C}$. We do not provide the details for want of space; however, the proof is similar to that of Theorem 6 in \cite{algebraic}.
\end{proof}

We briefly note the capacity achieving code construction for the non-multicast connections described in this section. For multiple multicast, a random linear network coding approach achieves capacity -- \ie the source nodes and the intermediate nodes can randomly and uniformly select coding coefficients. However, a minor modification is necessary for disjoint multicast and two-level multicast. We note that only the source's encoding matrix $A$ needs to be modified. As in the proofs of Theorems \ref{thm:disjoint} and \ref{thm:twolevel}, the intermediate nodes can randomly and uniformly select coding coefficients; thus, preserving the distributed and randomized aspect of the code construction. Once the coding coefficients at the intermediate nodes are selected, $S$ carefully chooses the encoding matrix $A$ such that the system matrix corresponding to the receivers of the disjoint multicast (in the two-level multicast, these would correspond to $T_i$, $i \in [1,K]$) is an identity matrix. This can be done because the system matrix $M$ is full rank.

\section{Network with Random Erasures}\label{sec:robust}

We consider the algebraic ADT problem where links may fail randomly, and cause erasures. Wireless networks are stochastic in nature, and random erasures occur dynamically over time. However, the original ADT network models noise deterministically with parallel noise-free bit-pipes. As a result, the min-cut (Definition \ref{def:mincut}) and the network code \cite{fragouli}\cite{edmund}\cite{goemans}, which depend on the hard-coded representation of noise, have to be recomputed every time the network changes.

We show that the algebraic framework for the ADT network is robust against random erasures and failures. First, we show that for some set of link failures, the network code remain successful. This translate to whether the system matrix $M$ preserves its full rank even after a subset of variables $\alpha_{(i, e_j)}, \epsilon_{(e_i, (D_j, k))}$, and $\beta_{(e_i, e_j)}$ associated with the failed links is set to zero. Second, we show that the specific instance of the system matrix $M$ and its rank are not as important as the \emph{average} $\text{rank}(M)$ when computing the time average min-cut. Note that the original min-cut definition (Definition \ref{def:mincut}) requires an optimization over exponential number of cuts for every time step to find the time average min-cut. With this insight, we shall use the results from \cite{reliable} to show that random linear network coding achieves the time-average min-cut, \ie is capacity-achieving.

We assume that any link within the network may fail. Given an ADT network $G$ and a set of link failures $f$, $G_f$ represents the network $G$ experiencing failures $f$. This can be achieved by deleting the failing links from $G$, which is equivalent to setting the coding variables in $B(f)$ to zero, where $B(f)$ is the set of coding variables associated with the failing links. We denote $M$ be the system matrix for network $G$. Let $M_{f}$ be the system matrix for the network $G_f$.

\subsection{Robust against Random Erasures}\label{sec:erasures}

Given an ADT network problem $(G, \m{C})$, let $\m{F}$ be the set of \emph{all} link failures such that, for any $f\in \m{F}$, the problem $(G_f, \m{C})$ is solvable. The solvability of a given $(G_f, \m{C})$ can be verified using resulting in Sections \ref{sec:singlesource} and \ref{sec:general}.  We are interested in static solutions, where the network is oblivious of $f$. In other words, we are interested in finding the set of link failures such that the network code is still successful in delivering the source processes to the destinations. For a multicast connection, we show the following surprising result.

\begin{theorem}[Static Solution for Random Erasures]\label{thm:static}
Given an ADT network problem $(G, \m{C})$ with a multicast connection $\m{C} = \{(S, T_1, \m{X}(S)),$ $(S, T_2, \m{X}(S)), ..., (S, T_N, \m{X}(S))\}$, there exists a \emph{static} solution to the problem $(G_f, \m{C})$ for all $f\in \m{F}$.
\end{theorem}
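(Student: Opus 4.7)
The plan is to show that a single choice of coding variables works simultaneously for every failure pattern in $\m{F}$, using a polynomial-product argument in the spirit of the robust network coding results of \cite{algebraic}.

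First, I would observe that for each $f \in \m{F}$ the failed network $G_f$ is obtained from $G$ simply by setting the coding variables in $B(f)$ to zero, so by Theorem \ref{thm:m} the corresponding system matrix $M_f$ and each of its submatrices $M_{f,i}$ for receiver $T_i$ are polynomials in the remaining free variables $\alpha_{(i,e_j)}$, $\beta_{(e_i,e_j)}$, $\epsilon_{(e_i,(T_j,k))}$. Since $(G_f,\m{C})$ is solvable by assumption, Theorem \ref{thm:singlemulticast} guarantees the existence of an assignment that makes $\det(M_{f,i}) \neq 0$ simultaneously for all $i$; in particular, for every pair $(f,i)$ the determinant $\det(M_{f,i})$ is not identically zero as a polynomial in the coding variables.

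Second, I would form the product polynomial
\[
P(\alpha,\beta,\epsilon) \;=\; \prod_{f\in\m{F}} \prod_{i=1}^{N} \det(M_{f,i}).
\]
As a product of finitely many nonzero polynomials, $P$ is itself a nonzero polynomial over $\mathbb{F}_q$. The set $\m{F}$ is finite (bounded by $2^{|\m{E}|}$) and $N$ is finite, so $P$ has bounded total degree. Applying the Schwartz--Zippel lemma, for a sufficiently large field $\mathbb{F}_q$ there exists an assignment of the coding variables in $\mathbb{F}_q$ for which $P \neq 0$. Any such assignment simultaneously makes $\det(M_{f,i}) \neq 0$ for every $f \in \m{F}$ and every $i \in [1,N]$, which by Theorem \ref{thm:mincut_maxflow} means the multicast connection is successfully established in every $G_f$ under this one fixed code. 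That code is the desired static solution.

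The main obstacle I anticipate is justifying that each factor $\det(M_{f,i})$ is nonzero as a formal polynomial rather than merely as a numerical value under some clever assignment; this is handled cleanly by invoking Theorem \ref{thm:singlemulticast} (solvability of $(G_f,\m{C})$ forces the polynomial to be nonvanishing on some point, hence nonzero as a polynomial). A secondary technical issue is the required field size, but since both $|\m{F}|$ and $N$ are finite, a large enough $q$ always exists, which is consistent with the field-size assumptions already used throughout the paper. This strategy also makes the static solution amenable to the random linear coding interpretation, setting up naturally for the time-average min-cut discussion that follows.
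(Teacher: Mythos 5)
Your proposal is correct and follows essentially the same route as the paper: invoke Theorem \ref{thm:singlemulticast} to get a nonvanishing determinant for each feasible failure pattern, form the product polynomial over all $f \in \m{F}$ (and receivers), and find a common nonzero assignment over a sufficiently large field. The only cosmetic difference is that you invoke the Schwartz--Zippel lemma directly and track the per-receiver submatrices $M_{f,i}$ explicitly, whereas the paper cites the result of \cite{rlc} with the field-size bound $q > |\m{F}|N$.
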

\begin{proof}
By Theorem \ref{thm:singlemulticast}, we know that for any given $f\in \m{F}$, the problem $(G_f, \m{C})$ is solvable; thus, there exists a code $\det{(M_f)}\ne 0$. Now, we need to show that there exists a code such that $\det{(M_f)}\ne 0$ for all $f\in \m{F}$ simultaneously. This is equivalent to finding a non-zero solution to the following polynomial: $\prod_{f\in \m{F}} \det{(M_f)} \ne 0$. Reference \cite{rlc} showed that if the field size is large enough ($q > |\m{F}||\m{T}| = |\m{F}|N$), then there exists an assignment of $\alpha_{(i, e_j)}, \epsilon_{(e_i, (D_j, k))}$, and $\beta_{(e_i, e_j)}$ such that $\det{(M_f)} \ne 0$ for all $ f \in \m{F}$.
\end{proof}



\begin{corollary}[Random Coding against Random Erasures]\label{thm:static_coding} Consider an ADT network problem with a multicast connection $\m{C} = \{(S, T_1, \m{X}(S)),$ $(S, T_2, \m{X}(S)), ..., (S, T_N, \m{X}(S))\}$, which is solvable under link failures $f$, for all $f\in \m{F}$. Then, random linear network coding, where some or all code variables $\alpha_{(i, e_j)}, \epsilon_{(e_i, (D_j, k))}$, and $\beta_{(e_i, e_j)}$ are chosen independently and uniformly over all elements of $\mathbb{F}_q$ guarantees decodability at destination nodes $T_i$ for all $i$ simultaneously and remains successful regardless of the failure pattern $f \in \m{F}$ with high probability at least $(1- \frac{N|\m{F}|}{q})^\eta$, where $\eta$ is the number of links carrying random combinations of the source processes.
\end{corollary}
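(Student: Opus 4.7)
The plan is to reduce this corollary to a direct application of Theorem \ref{thm:static} combined with the random coding analysis in [rlc] that was already invoked for Corollary \ref{thm:coding_multicast}. The point of Theorem \ref{thm:static} is that the single polynomial $P = \prod_{f \in \m{F}} \det(M_f)$, viewed as a polynomial in the free variables $\alpha_{(i,e_j)}, \beta_{(e_i,e_j)}, \epsilon_{(e_i,(T_j,k))}$, is not identically zero over $\mathbb{F}_q$ provided $q$ is large enough. The probabilistic corollary I want is the natural counterpart: if we choose those free variables uniformly and independently, then $P$ remains nonzero with controlled probability.

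First I would expand $P$ across destinations as well as failure patterns. For each $f \in \m{F}$, decomposing the multicast system matrix $M_f = [M_{f,1}, \dots, M_{f,N}]$ destination-by-destination gives $\det(M_f) \neq 0$ iff $\det(M_{f,i}) \neq 0$ for every $i \in [1,N]$, where $M_{f,i}$ is the $|\m{X}(S)| \times |\m{X}(S)|$ submatrix for receiver $T_i$ under failure pattern $f$. Thus the joint event ``random code succeeds for every receiver under every failure pattern'' is exactly
\begin{equation*}
\prod_{f \in \m{F}} \prod_{i=1}^{N} \det(M_{f,i}) \neq 0,
\end{equation*}
a product of $N|\m{F}|$ polynomial factors in the coding variables.

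Next I would invoke Theorem \ref{thm:static} to conclude that this product polynomial is not identically zero: the theorem already supplies at least one assignment making every factor nonvanishing simultaneously, hence the product is a nonzero polynomial in the free variables. At this point the argument becomes a verbatim repeat of the random coding analysis used in Corollary \ref{thm:coding_multicast}: each factor $\det(M_{f,i})$ has, when viewed in the $\eta$ independently chosen coding coefficients associated with links carrying random combinations, degree at most $\eta$ in the appropriate sense, and the Schwartz-Zippel-type bound from [rlc] gives that each individual factor is nonzero with probability at least $(1 - 1/q)^{\eta}$. Applying the union bound over the $N|\m{F}|$ factors (or equivalently, using Lemma~A in [rlc] with $d = N|\m{F}|$) yields that all factors are simultaneously nonzero with probability at least $(1 - N|\m{F}|/q)^{\eta}$, which is the claimed bound.

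The only mild subtlety, and the step I would be most careful about, is the justification of the per-factor degree bound of $\eta$ in the random variables. This is precisely the setting of [rlc]: coefficients associated with source encoding, intermediate coding, and sink decoding each appear with degree one in the relevant transfer determinant, so that $\det(M_{f,i})$ is multilinear in the $\eta$ random coefficients. Since setting a link's coefficient to zero to model a failure does not increase this degree, the bound survives uniformly across $f$. Everything else is a direct translation of the existence guarantee of Theorem \ref{thm:static} into its randomized form.
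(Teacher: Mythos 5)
Your proposal is correct and follows essentially the same route as the paper, which simply cites the random-coding result of [rlc] for feasible multicast connections under the failure set $\m{F}$; your expansion of the success event into the product $\prod_{f\in\m{F}}\prod_{i=1}^{N}\det(M_{f,i})$, the non-vanishing guaranteed by Theorem \ref{thm:static}, and the degree-$d$ lemma of [rlc] with $d=N|\m{F}|$ are exactly the details behind that one-line citation. The only caution is that the bound $(1-N|\m{F}|/q)^{\eta}$ comes from that lemma applied to the product polynomial (multilinear in each coding variable per factor), not from a literal union bound over factors, which would give a slightly different expression --- but you correctly name the lemma as the precise tool, so the argument stands.
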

\begin{proof}
Given a multicast connection that is feasible under any link failures $f \in \m{F}$, reference \cite{rlc} shows that random linear network coding achieves capacity for multicast connections, and is robust against any link failures $f \in \m{F}$ with high probability $(1 - \frac{N |\m{F}|}{q})^\eta$.
\end{proof}

We note that it is unclear whether this can be extended to the non-multicast connections, as noted in \cite{algebraic}. Reference \cite{algebraic} shows a simple example network in which no static solution is available for a set of feasible failure patterns.

\subsection{Time-average Min-cut}\label{sec:average}

In this section, we study the time-average behavior of the ADT network, given random erasures. We use techniques from \cite{reliable}, which studies reliable communication over lossy networks with network coding.

Consider an ADT network $G$. In order to study time-average behavior, we introduce erasure distributions. Let $\m{F}'$ be a set of link failure patterns in $G$. Assume that any set of link failures $f \in \m{F'}$ may occur with probability $p_f$. In this section, we study the average behavior of the network over a long period of time; thus, the steady state behavior.

\begin{theorem}[Min-cut for Time-varying Network]\label{thm:average_mincut}
Assume an ADT network $G$ in which link failure pattern $f \in \m{F}'$ occurs with probability $p_f$. Then, the average min-cut between two nodes $S$ and $T$ in $G$, $mincut_{\m{F}'}(S,T)$ is
\[
mincut_{\m{F}'} (S,T) = \sum_{f\in \m{F}'} p_f \left(\max_{\alpha_{(i, e)}, \beta_{(e', e)}, \epsilon_{(e', i)}} \text{rank}(M_f)\right).
\]
\end{theorem}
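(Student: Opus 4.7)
The plan is to reduce Theorem \ref{thm:average_mincut} to a per-instance application of Theorem \ref{thm:mincut} together with a linearity-of-expectation argument over the failure distribution. For each fixed failure pattern $f \in \m{F}'$, the network $G_f$ obtained by deleting the failed links is itself an ADT network. Theorem \ref{thm:mincut} then applies directly to $G_f$ and tells us that the instantaneous min-cut between $S$ and $T$ in $G_f$ equals $\max_{\alpha, \beta, \epsilon} \text{rank}(M_f)$, where $M_f$ is the system matrix associated with $G_f$ (equivalently, the system matrix $M$ of $G$ after zeroing out the coding variables $B(f)$ corresponding to the erased links).

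The next step is to give meaning to the ``time-average min-cut'' quantity $mincut_{\m{F}'}(S,T)$. Following the framework of \cite{reliable}, I would interpret this as the expected instantaneous $S$--$T$ capacity when the failure pattern is drawn according to the stationary distribution $\{p_f\}_{f\in\m{F}'}$, which by an ergodic argument coincides with the long-run time average of the instantaneous min-cut. Once this identification is in place, linearity of expectation immediately yields
\begin{equation*}
mincut_{\m{F}'}(S,T) \;=\; \sum_{f \in \m{F}'} p_f \cdot mincut_{G_f}(S,T) \;=\; \sum_{f \in \m{F}'} p_f \left( \max_{\alpha_{(i,e)},\,\beta_{(e',e)},\,\epsilon_{(e',i)}} \text{rank}(M_f) \right),
\end{equation*}
which is exactly the stated formula.

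The only subtlety, and the main obstacle, is justifying the first equality above, \ie that the ``average min-cut'' is indeed the probabilistic average of per-pattern min-cuts rather than, say, the min-cut of some averaged graph. The natural way to handle this is to invoke the coding-theoretic definition used in \cite{reliable}: the max-flow rate achievable over long block lengths in a network whose link states are i.i.d.\ across time equals the expected max-flow rate per slot. Combined with the per-slot Min-cut Max-flow theorem on $G_f$ (Theorem \ref{thm:mincut_maxflow}) and the algebraic characterization of Theorem \ref{thm:mincut}, this closes the argument. The remaining steps --- writing $M_f$ as $M$ restricted by $B(f)=0$, and confirming that the $\max$ over coding variables can be taken independently for each $f$ because the average decomposes linearly --- are essentially bookkeeping and do not require any additional machinery beyond what has already been developed in Sections \ref{sec:algebraic} and \ref{sec:mincut}.
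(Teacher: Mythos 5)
Your proposal follows essentially the same route as the paper: fix a failure pattern $f$, apply Theorem~\ref{thm:mincut} to $G_f$ to identify the instantaneous min-cut with $\max_{\alpha,\beta,\epsilon}\text{rank}(M_f)$, and then obtain the stated formula by averaging over the distribution $\{p_f\}$ (the paper phrases this as ``taking a time average,'' which is exactly the expectation/ergodic identification you spell out). Your additional remarks on why the average min-cut is the weighted sum of per-pattern min-cuts, rather than the min-cut of an averaged graph, merely make explicit what the paper treats as the definition of $mincut_{\m{F}'}(S,T)$, so there is no substantive difference.
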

\begin{proof}
By Theorem \ref{thm:mincut}, we know that at any given time instance with failure pattern $f$, the min-cut between $S$ and $T$ is given by $\max_{\alpha_{(i, e)}, \beta_{(e', e)}, \epsilon_{(e', i)}} \text{rank}(M_f)$. Then, the above statement follows naturally by taking a time average of the min-cut values between $S$ and $T$.
\end{proof}

The key difference between Theorem \ref{thm:static} and Theorem \ref{thm:average_mincut} is that in Theorem \ref{thm:static}, any failure pattern $f\in \m{F}$ may change the network topology as well as min-cut but $mincut(S,T) \geq |\m{X}(S)|$ holds for all $f\in \m{F}$ -- \ie $(G_f, \m{C})$ is assumed to be solvable. However, in Theorem \ref{thm:average_mincut}, we make no assumption about the connection as we are evaluating the average value of the min-cut.

Unlike the case of static ADT networks, with random erasures, it is necessary to maintain a queue at each node in the ADT network. This is because, if a link fails when a node has data to transmit on it, then it will have to wait until the link recovers. In addition, a transmitting node needs to be able to learn whether a packet has been received by the next hop node, and whether it was innovative -- this can be achieved using channel estimation, feedback and/or redundancy. In the original ADT network, the issue of feedback was removed by assuming that the links are noiseless bit-pipes. We present the following corollaries under these assumptions.

\begin{corollary}[Multicast in Time-varying Network]\label{thm:average_multicast} Consider an ADT network $G$ and a multicast connection $\m{C} = \{(S, T_1, \m{X}(S)), ..., (S, T_N, \m{X}(S))\}$. Assume that failures occur where failure patten $f \in \m{F}'$ occurs with probability $p_f$. Then, the multicast connection is feasible if and only if $mincut_{\m{F}'}(S,T_i) \geq |\m{X}(S)|$ for all $i$.
\end{corollary}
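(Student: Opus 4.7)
The plan is to prove the two directions separately, leveraging Theorem \ref{thm:average_mincut} for necessity and a random linear coding construction with per-node queues for sufficiency. For necessity, I would observe that if $\m{C}$ is achievable in the long run, then every destination $T_i$ must reconstruct all $|\m{X}(S)|$ source processes at an average rate of at least $|\m{X}(S)|$ symbols per unit time. Combining this with the algebraic identity $\m{Z}(T_i) = \m{X}(S)\, M_f$ from Theorem \ref{thm:m} (applied in each slot with failure pattern $f$) and averaging over $f \in \m{F}'$ with weights $p_f$ gives, via Theorem \ref{thm:average_mincut}, the bound $mincut_{\m{F}'}(S,T_i) \geq |\m{X}(S)|$ for each $i$, which is the necessary condition.

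For sufficiency, I would adopt the reliable-network-coding framework of \cite{reliable}. Each node maintains a queue of received packets. At every time slot in which a hyperedge out of $V$ is operational (i.e., its incident links have not failed under the current $f$), $V$ transmits a random linear combination over $\mathbb{F}_q$ of the packets currently in its queue; the source $S$ analogously injects random combinations of $\m{X}(S)$ into $O(S)$. Innovative packets, identified through the feedback mechanism noted in the paragraph preceding the corollary, are appended to downstream queues. Using the queue-stability analysis of \cite{reliable} together with Corollary \ref{thm:static_coding}, innovative combinations arrive at each $T_i$ at a steady-state rate equal to $mincut_{\m{F}'}(S,T_i)$. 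Since by hypothesis this quantity is at least $|\m{X}(S)|$, each $T_i$ eventually accumulates $|\m{X}(S)|$ linearly independent combinations of $\m{X}(S)$, inverts its decoding matrix, and recovers the source processes, yielding feasibility of $\m{C}$.

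The main obstacle will be reconciling the queueing dynamics with the algebraic structure of Theorem \ref{thm:m} when broadcast hyperedges and MAC constraints are present. Under random erasures, the entries of $F$ in a single row (a broadcast hyperedge) may be zeroed out in a correlated fashion, which is a coupling not present in the point-to-point lossy networks of \cite{reliable}. I would handle this by defining, for each failure pattern $f$, an effective adjacency matrix $F_f$ obtained by suppressing the rows/columns corresponding to failed ports, and checking that the innovation count at $T_i$ in slot $t$ equals $\mathrm{rank}(M_{f_t})$ under the random choice of coding coefficients. Once this per-slot correspondence is established, Theorem \ref{thm:average_mincut} gives the expected rank, and a routine law-of-large-numbers argument together with the multicast random-coding bound $(1 - N|\m{F}'|/q)^\eta$ from Corollary \ref{thm:static_coding} closes the sufficiency direction; the necessity direction is then the converse cut-set bound already derived above.
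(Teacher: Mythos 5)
Your overall route is the same as the paper's: both directions are delegated to the lossy-network-coding framework of \cite{reliable}, with per-node queues, random linear coding, innovative-packet accounting under feedback, and queue stability giving an innovation rate equal to the time-average min-cut of Theorem \ref{thm:average_mincut}; the paper's proof is essentially a one-line citation of exactly this argument (including the $M/M/1$ stability remark), so your sketch is a faithful expansion of it, and your worry about correlated erasures on broadcast hyperedges is a legitimate point that the paper simply glosses over when it invokes \cite{reliable}. One step in your sufficiency argument is off, however: Corollary \ref{thm:static_coding} is not the right tool here. That corollary presupposes that $(G_f, \m{C})$ is solvable for \emph{every} $f \in \m{F}$ (so that a single static code with $q > N|\m{F}|$ works for all patterns simultaneously), whereas in the time-varying setting no such per-pattern solvability is assumed --- a given realization $f_t$ may have instantaneous min-cut well below $|\m{X}(S)|$, and only the $p_f$-average must exceed $|\m{X}(S)|$. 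The achievability must therefore rest entirely on the queueing/random-coding analysis of \cite{reliable} (or an LLN over per-slot ranks), not on the static-solution bound $(1-N|\m{F}|/q)^{\eta}$; dropping that invocation leaves your argument aligned with the paper's.
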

\begin{proof} Reference \cite{reliable} shows that the multicast connection is feasible if and only $mincut_{\m{F}'}(S,T_i) \geq |\m{X}(S)|$ for all $i$. The proof in \cite{reliable} relies on the fact that every node behaves like a stable $M/M/1$ queuing system in steady-state, and thus, the queues (or the number of innovative packets to be sent to the next hop node) has a finite mean if the network is run for sufficiently long period of time.
\end{proof}
\begin{corollary}[Random Coding for Time-varying Network] Consider $(G, \m{C})$ problem where $\m{C}$ is a multicast connection. Assume failure pattern $f \in \m{F}'$ occurs with probability $p_f$. Then, random linear network coding, where some or all code variables $\alpha_{(i, e_j)}, \beta_{(e_i, e_j)}, \epsilon_{(e_i, (D_j, k))}$ are chosen over all elements of $\mathbb{F}_q$ guarantees decodability at destination nodes $T_i$ for all $i$ simultaneously with arbitrary small error probability.
\end{corollary}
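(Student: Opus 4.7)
The plan is to combine Corollary~\ref{thm:average_multicast} (feasibility of the time-varying multicast), the random coding arguments of Corollaries~\ref{thm:coding_multicast} and~\ref{thm:static_coding}, and the queueing machinery of \cite{reliable}. Implicit in the hypothesis is that $mincut_{\m{F}'}(S,T_i) \geq |\m{X}(S)|$ for every $i$; otherwise, by the converse direction of Corollary~\ref{thm:average_multicast}, no scheme can achieve decodability, and the statement is vacuous.

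First I would unroll the network over $n$ time slots, during which i.i.d.\ failure patterns $f_1,\ldots,f_n \in \m{F}'$ are drawn with probabilities $\{p_f\}$. Intermediate nodes maintain queues of received packets and transmit independent uniform $\mathbb{F}_q$-linear combinations whenever an outgoing link is active. Because the average cut is at least the source rate, \cite{reliable} shows that every queue is stable; by a strong law of large numbers, $T_i$ collects at least $n|\m{X}(S)|$ innovative packets by time $n$ with probability tending to $1$. Within each ``generation'' of $|\m{X}(S)|$ source packets the random coefficients are redrawn, so the random-matrix argument reused in the proofs of Corollaries~\ref{thm:coding_multicast} and~\ref{thm:static_coding} implies that the received system matrix at $T_i$ has full rank except with probability at most $N/q$. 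Since successive generations then form independent Bernoulli trials, a union bound over the $N$ receivers and the finite number of generations completed within the session drives the total decoding error probability to zero.

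The principal obstacle is the coupling between the random coding coefficients and the random erasure process: a naive union bound over all $|\m{F}'|^n$ failure histories grows with $n$ and would swamp the $N/q$ per-generation guarantee. The clean workaround, following \cite{reliable}, is precisely to batch transmissions into generations and reseed coding coefficients per generation, thereby decoupling the two sources of randomness; queue stability then bounds the expected decoding time of each generation, and the session-wide error probability follows from a standard Bernoulli-tail argument. The remaining details closely mirror Section~V of \cite{reliable}, which we invoke rather than reproduce.
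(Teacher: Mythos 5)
Your proposal is correct and follows essentially the same route as the paper: the paper's proof is simply a one-line appeal to Corollary~\ref{thm:average_multicast} together with the results of \cite{rlc} and \cite{reliable}, which is exactly the skeleton (implicit min-cut feasibility, then the queueing/generation-based random-coding argument of \cite{reliable}) that you spell out in more detail.
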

\begin{proof} This is a direct consequence of Corollary \ref{thm:average_multicast} and results in \cite{rlc}\cite{reliable}.
\end{proof}

%

\section{Network with Cycles}\label{sec:delay}

\begin{figure*}[tbp]
\[\scriptsize
\left(
\begin{array}{cccccccccccc}
1 & 0 & D & 0 & 0 & D & D^2\beta_{(e_3, e_7)} & 0 & D^2 \beta_{(e_6, e_9)} & D^2\beta_{(e_6, e_{10})} & D^3 \beta_{(e_6, e_9)} & D^3\beta_{(e_3, e_7)} + D^3 \beta_{(e_6, e_{10})}\\
0 & 1 & 0 & D & 0 & 0 & D^2 \beta_{(e_4, e_7)} & 0 & 0 & 0 & 0 & D^3 \beta_{(e_4, e_7)}\\
0 & 0 & 1 & 0 & 0 & 0 & D\beta_{(e_3, e_7)} &  0 & 0 & 0 & 0 & D^2 \beta_{(e_3, e_7)}\\
0 & 0 & 0 & 1 & 0 & 0 & D \beta_{(e_4, e_7)} & 0 & 0 & 0 & 0 & D^2 \beta_{(e_4, e_7)}\\
0 & 0 & 0 & 0 & 1 & 0 & 0 & 0 & 0 & 0 & 0 & 0\\
0 & 0 & 0 & 0 & 0 & 1 & 0 & 0 & D\beta_{(e_6, e_{9})} & D\beta_{(e_6, e_{10})}& D^2 \beta_{(e_6, e_9)} & D^2 \beta_{(e_6, e_{10})}\\
0 & 0 & 0 & 0 & 0 & 0 & 1 & 0 & 0 & 0 & 0 & D\\
0 & 0 & 0 & 0 & 0 & 0 & 0 & 1 & 0 & 0 & 0 & 0\\
0 & 0 & 0 & 0 & 0 & 0 & 0 & 0 & 1 & 0 & D & 0\\
0 & 0 & 0 & 0 & 0 & 0 & 0 & 0 & 0 & 1 & 0 & D\\
0 & 0 & 0 & 0 & 0 & 0 & 0 & 0 & 0 & 0 & 1 & 0\\
0 & 0 & 0 & 0 & 0 & 0 & 0 & 0 & 0 & 0 & 0 & 1\\
\end{array}
\right)
\]\vspace*{-.3cm}\caption{$12 \times 12$ matrix $(I-DF)^{-1}$ for network in Figure \ref{fig:network}. The matrix $F$ can be found in Figure \ref{fig:F}.}\vspace*{-.4cm}\label{fig:DF}
\end{figure*}

ADT networks are acyclic, with links directed from the source nodes to the destination nodes. However, wireless networks intrinsically have cycles as wireless links are bi-directional by nature. In this section, we extend the ADT network model to networks with cycles. In order to incorporate cycles, we need to introduce the notion of time -- since, without the notion of time, the network with cycles may not be casual. To do so, we introduce delay on the links. As in \cite{algebraic}, we model each link to have the same delay, and express the network random processes in the delay variable $D$.

We define $X_t(S, i)$ and $Z_t(T, j)$ to be the $i$-th and $j$-th binary random process generated at source $S$ and received at destination $T$ at time $t$, for $t = 1, 2,...$. We define $Y_t(e)$ to be the process on edge $e$ at time $t = 1, 2, ...$, respectively. We express the source processes as a power series in $D$, $\m{X}(S, D) = [X(S, 1, D), X(S, 2, D), ..., X(S, \mu(S), D)]$ where $X(S, i, D) = \sum_{t=0}^\infty X_t(S, i)D^t$. Similarly, we express the destination random processes $\m{Z}(T, D) = [Z(T, 1, D),$ $...,Z(T, \nu(Z), D)]$ where $Z(T, i, D) = \sum_{t=0}^\infty Z_t(T, i)D^t$. In addition, we express the edge random processes as $Y_t (e, D) = \sum_{t=0}^\infty Y_t(e)D^t$. Then, we can rewrite Equations (\ref{eq:y}) and (\ref{eq:y-s}) as
\begin{equation*}\label{eq:time_y-s}
Y_{t+1}(e) =\sum_{e'\in I(V)} \beta_{(e', e)} Y_t(e') + \sum_{X_t(S,i) \in \m{X}(S)} \alpha_{(i, e)} X_t(S, i).
\end{equation*}
Furthermore, the output processes $Z_t(T,i)$ can be rewritten as
\begin{align*}\label{eq:time_z}
Z_{t+1}(T, i)  =  \sum_{e' \in I(T)}\epsilon_{e',(T, i)} Y_{t}(e').
\end{align*}
%
Using this formulation, we can extend the results from \cite{algebraic} to ADT networks with cycles. We show that a system matrix $M(D)$ captures the input-output relationships of the ADT networks with delay and/or cycles.

\begin{theorem}\label{thm:delay}
Given a network $G = (\m{V}, \m{E})$, let $A(D)$, $B(D)$, and $F$ be the encoding, decoding, and adjacency matrices, as defined here:
\begin{align*}
A_{i,j} &=
\begin{cases}
\alpha_{(i, e_j)}(D) &\text{if $e_j \in O(S)$ and $X(S, i) \in \m{X}(S)$,}\\
0 &\text{otherwise.}
\end{cases}\\
B_{i, (T_j, k)} &=
\begin{cases}
\epsilon_{(e_i, (T_j, k))}(D) &\text{if $e_i \in I(T_j)$, $Z(T_j, k) \in \m{Z}(T_j)$,}\\
0 &\text{otherwise}.
\end{cases}
\end{align*}
and $F$ as in Equation (\ref{eq:F}). The variables $\alpha_{(i, e_j)} (D)$ and $\epsilon_{(e_i, (T_j, k))}(D)$ can either be constants or rational functions in $D$. Then, the system matrix of the ADT network with delay (and thus, may include cycles) is given as
\begin{equation}
M (D) = A(D) \cdot (I - DF)^{-1} \cdot B(D)^T.
\end{equation}
\end{theorem}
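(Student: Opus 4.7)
The plan is to mirror the proof of Theorem \ref{thm:m} in the $D$-transform domain, deriving the stated transfer-function identity by taking the generating function of the time-domain recurrences and solving the resulting linear matrix equation. First, I would collect the edge processes into a row vector $\mathcal{Y}(D) = [Y(e_1,D), \ldots, Y(e_{|\mathcal{E}|},D)]$ and similarly assemble $\mathcal{X}(S,D)$ and $\mathcal{Z}(D)$. Applying the $D$-transform (assuming zero initial conditions) to the recurrence $Y_{t+1}(e) = \sum_{e'\in I(V)} \beta_{(e',e)}Y_t(e') + \sum_i \alpha_{(i,e)}X_t(S,i)$ yields $Y(e,D) = D\sum_{e'}\beta_{(e',e)}Y(e',D) + D\sum_i \alpha_{(i,e)}X(S,i,D)$ for every edge, which in matrix form is $\mathcal{Y}(D) = D\,\mathcal{Y}(D)F + \mathcal{X}(S,D)A(D)$, where the leading $D$ acting on the source term has been absorbed into $A(D)$ (permissible since the theorem allows $\alpha_{(i,e_j)}(D)$ to be a rational function of $D$).

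Second, I would establish that $(I-DF)$ is invertible, so that $\mathcal{Y}(D) = \mathcal{X}(S,D)\,A(D)(I-DF)^{-1}$. Unlike the acyclic case, where this followed from nilpotence of $F$, here $F$ is generally not nilpotent and the argument must be formal. Two equivalent viewpoints suffice: working over $\mathbb{F}_q(D)$, the determinant $\det(I-DF)$ is a polynomial in $D$ with constant term $\det(I)=1\neq 0$, hence a unit; or working in the formal power series ring $\mathbb{F}_q[[D]]$, the geometric expansion $(I-DF)^{-1} = \sum_{k\ge 0} D^k F^k$ converges coefficient-wise because only finitely many terms contribute at each order in $D$, regardless of whether $F$ has cycles.

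Third, I would apply the $D$-transform to $Z_{t+1}(T,i) = \sum_{e'\in I(T)} \epsilon_{(e',(T,i))} Y_t(e')$, again absorbing the delay factor into the rational coefficients defining $B(D)$, to obtain $\mathcal{Z}(D) = \mathcal{Y}(D)\,B(D)^T$. Substituting the expression for $\mathcal{Y}(D)$ from the previous step gives
\begin{equation*}
\mathcal{Z}(D) \;=\; \mathcal{X}(S,D)\,A(D)(I-DF)^{-1}B(D)^T \;=\; \mathcal{X}(S,D)\,M(D),
\end{equation*}
which is precisely the claimed identity.

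The main obstacle is the second step: once $(I-DF)^{-1}$ is justified as a well-defined matrix over $\mathbb{F}_q(D)$ (or $\mathbb{F}_q[[D]]$), the rest is a straightforward transcription of the acyclic argument into the delay variable. The rest of the work is bookkeeping of where the single-unit delay factors enter, and is analogous to the derivation of Theorem 3 in \cite{algebraic}; the novel content is the observation that moving from $F$ to $DF$ trades nilpotence for formal invertibility and thereby accommodates cycles at no additional cost.
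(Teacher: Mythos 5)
Your proposal is correct and follows essentially the same route the paper intends: the paper's proof of Theorem~\ref{thm:delay} simply defers to the argument for Theorem~\ref{thm:m} and to Theorem~3 of \cite{algebraic}, which is exactly the $D$-transform transcription you carry out. Your explicit justification that $(I-DF)^{-1}$ exists over $\mathbb{F}_q(D)$ (constant term of $\det(I-DF)$ equal to $1$), replacing the nilpotence argument used in the acyclic case, is the detail the paper leaves implicit, and your handling of the unit-delay factors by absorbing them into the rational coefficients $\alpha_{(i,e_j)}(D)$ and $\epsilon_{(e_i,(T_j,k))}(D)$ is consistent with the theorem statement.
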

\begin{proof}The proof for this is similar to that of Theorem \ref{thm:m}; thus, we shall not discuss this in detail for want of space.
\end{proof}

Similar to Section \ref{sec:algebraic}, $(I-DF)^{-1}$ represents the impulse response of the network with delay. This is because the series $I + DF + D^2F^2 + D^3F^3 + ...$ represents the connectivity of the network while taking delay into account. For example, $F^k$ has a non-zero entry if there exists a path of length $k$ between two port. Now, since we want to represent the time associated with traversing from port $e_i$ to $e_j$, we use $D^kF^k$, where $D^k$ signifies that the path is of length $k$. Thus, $(I-DF)^{-1} = I + DF + D^2F^2 + D^3F^3 + ...$ is the impulse response of the network with delay. An example of $(I-DF)^{-1}$ for the example network in Figure \ref{fig:newnetwork} is shown in Figure \ref{fig:DF}.

Using the system matrix $M(D)$ from Theorem \ref{thm:delay}, we can extend Theorem \ref{thm:mincut_maxflow}, Theorem \ref{thm:singlemulticast}, Theorem \ref{thm:multiplemulticast}, Theorem \ref{thm:disjoint}, and Theorem \ref{thm:twolevel} to ADT networks with cycles/delay. However, there is a minor technical change. We now operate in a difference field -- instead of having coding coefficients from the finite field $\mathbb{F}_q$, the coding coefficients $\alpha_{(i, e_j)}(D)$ and $\epsilon_{((e_i, (T_j, k)))}(D)$ are now from $\mathbb{F}_q(D)$, the field of rational functions of $D$. We shall not discuss the proofs in detail; however, this is a direct application of results in \cite{algebraic}.

\section{Conclusions}
ADT networks \cite{adt1}\cite{adt2} have drawn considerable attention for its potential to approximate the capacity of wireless relay networks. In this paper, we showed that the ADT network can be described well within the algebraic network coding framework \cite{algebraic}. This connection between ADT network and algebraic network coding allows the use of results on network coding to understand better the ADT networks. We emphasize again that the aim of this paper is not to prove or disprove ADT network model's ability to approximate the capacity of the wireless networks, but to show that the ADT network problems, including that of computing the min-cut and constructing a code, can be captured by the algebraic network coding framework.

In this paper, we derived an algebraic definition of min-cut for the ADT networks, and provided an algebraic interpretation of the Min-cut Max-flow theorem for a single unicast/mulciast connection in ADT networks. Furthermore, by taking advantage of the algebraic structure, we have shown feasibility conditions for a variety of set of connections $\m{C}$, such as multiple multicast, disjoint multicast, and two-level multicast. We also showed optimality of linear operations for the connections listed above in the ADT networks, and showed that random linear network coding achieves capacity.

We extended the capacity characterization to networks with cycles and random erasures/failures. Thus, we proved the optimality of linear operations (as well as random linear network coding) for multicast connections in ADT networks with cycles. Furthermore, by incorporating random erasures into the ADT network model, we showed that random linear network coding is robust against failures and erasures.

\bibliography{References}

\begin{thebibliography}{10}
\providecommand{\url}[1]{#1}
\csname url@samestyle\endcsname
\providecommand{\newblock}{\relax}
\providecommand{\bibinfo}[2]{#2}
\providecommand{\BIBentrySTDinterwordspacing}{\spaceskip=0pt\relax}
\providecommand{\BIBentryALTinterwordstretchfactor}{4}
\providecommand{\BIBentryALTinterwordspacing}{\spaceskip=\fontdimen2\font plus
\BIBentryALTinterwordstretchfactor\fontdimen3\font minus
  \fontdimen4\font\relax}
\providecommand{\BIBforeignlanguage}[2]{{%
\expandafter\ifx\csname l@#1\endcsname\relax
\typeout{** WARNING: IEEEtran.bst: No hyphenation pattern has been}%
\typeout{** loaded for the language `#1'. Using the pattern for}%
\typeout{** the default language instead.}%
\else
\language=\csname l@#1\endcsname
\fi
#2}}
\providecommand{\BIBdecl}{\relax}
\BIBdecl

\bibitem{adt1}
A.~S. Avestimehr, S.~N. Diggavi, and D.~N.~C. Tse, ``A deterministic approach
  to wireless relay networks,'' in \emph{Proceedings of Allerton Conference on
  Communication, Control, and Computing}, September 2007.

\bibitem{adt2}
------, ``Wireless network information flow,'' in \emph{Proceedings of Allerton
  Conference on Communication, Control, and Computing}, September 2007.

\bibitem{fragouli}
A.~Amaudruz and C.~Fragouli, ``Combinatorial algorithms for wireless
  information flow,'' in \emph{Proceedings of ACM-SIAM Symposium on Discrete
  Algorithms}, 2009.

\bibitem{edmund}
E.~Erez, Y.~Xu, , and E.~M. Yeh, ``Coding for the deterministic network
  model,'' in \emph{Proceedings of the Information Theory and Applications
  Workshop (ITA), Invited Paper}, January 2010.

\bibitem{goemans}
M.~X. Goemans, S.~Iwata, and R.~Zenklusen, ``An algorithmic framework for
  wireless information flow,'' in \emph{Proceedings of Allerton Conference on
  Communication, Control, and Computing}, September 2009.

\bibitem{algebraic}
R.~Koetter and M.~M\'{e}dard, ``An algebraic approach to network coding,''
  \emph{IEEE/ACM Transaction on Networking}, vol.~11, pp. 782--795, 2003.

\bibitem{ahlswede}
R.~Ahlswede, N.~Cai, S.-Y.~R. Li, and R.~W. Yeung, ``Network information
  flow,'' \emph{IEEE Transactions on Information Theory}, vol.~46, pp.
  1204--1216, 2000.

\bibitem{jaggi}
S.~Jaggi, Y.~Cassuto, and M.~Effros, ``Low complexity encoding for network
  codes,'' in \emph{Proceedings of International Symposium on Information
  Theory}, 2006.

\bibitem{mac}
S.~Ray, M.~M\'{e}dard, and J.~Abounadi, ``Noise-free multiple access networks
  over finite fields,'' in \emph{Proceedings of Allerton Conference on
  Communication, Control, and Computing}, October 2003.

\bibitem{analog_opt}
I.~Maric, A.~Goldsmith, and M.~M\'{e}dard, ``Analog network coding in the
  high-{SNR} regime,'' in \emph{Proceedings of ITA Workshop}, 2010.

\bibitem{additive}
M.~Effros, M.~M\'{e}dard, T.~Ho, S.~Ray, D.~Karger, and R.~Koetter, ``Linear
  network codes: A unified framework for source channel, and network coding,''
  in \emph{Proceedings of the DIMACS workshop on network information theory
  (Invited paper)}, 2003.

\bibitem{rlc}
T.~Ho, M.~M\'{e}dard, R.~Koetter, M.~Effros, J.~Shi, and D.~R. Karger, ``A
  random linear coding approach to mutlicast,'' \emph{IEEE Transaction on
  Information Theory}, vol.~52, pp. 4413--4430, 2006.

\bibitem{insufficiency}
R.~Dougherty, C.~Freiling, and K.~Zeger, ``Insufficiency of linear coding in
  network information flow,'' \emph{IEEE Transaction on Information Theory},
  vol.~51, pp. 2745--2759, 2005.

\bibitem{reliable}
D.~Lun, M.~M\'{e}dard, R.~Koetter, and M.~Effros, ``On coding for reliable
  communication over packet networks,'' \emph{Physical Communication}, vol.~1,
  no.~1, pp. 3--20, March 2008.

\end{thebibliography}
\bibliographystyle{IEEEtran}

\end{document}